\documentclass{article}

% Package to generate and customize Algorithm as per ACM style
\usepackage[ruled]{algorithm2e}

\usepackage{amsmath,amsfonts,amssymb,bbm} 
\SetArgSty{textrm}  % for algorithm2e
\SetAlFnt{\small}
\SetAlCapFnt{\small}
\SetAlCapNameFnt{\small}
\SetAlCapHSkip{0pt}
\IncMargin{-\parindent}

\usepackage{fullpage}
\usepackage{amsthm}
\usepackage{pdfpages}
\usepackage{float}
\usepackage{eso-pic}
\usepackage{ifthen}
\usepackage{pstricks}
\usepackage{url}

%\conferenceinfo{EC'16,}{July 24--28, 2016, Maastricht, The Netherlands.}
%\CopyrightYear{2016}
%\crdata{978-1-4503-3936-0/16/07}
%\copyrighttext{Copyright is held by the owner/author(s). Publication rights licensed to ACM.}

%\doi{XXXXXXX.XXXXXXX}% TeXSupport

% Document starts

% Title portion

\usepackage{mathrsfs}

\newtheorem{theorem}{Theorem}
\newtheorem{corollary}[theorem]{Corollary}
\newtheorem{lemma}[theorem]{Lemma}

{\theoremstyle{remark} }
{\theoremstyle{definition} \newtheorem{definition}[theorem]{Definition}}

\renewcommand{\qedsymbol}{\ensuremath{\blacksquare}}

\newcommand{\remove}[1]{}
\newcommand{\suppress}[1]{}

\newcommand{\RR}{{\mathbb{R}}}

\newcommand{\eps}{\epsilon}

\DeclareMathOperator{\best}{br}
\newcommand{\ljs}[1]{{ \textcolor{red}{(Leonard says:  #1)}}{}}

\begin{document}

\title{% Finite-Level Best-Response Dynamics in Fisher Markets \\
%Bounded Ply Best Response Dynamics in Fisher Markets
Market Dynamics of Best-Response with Lookahead
% in Fisher Markets
}
% with Inconsistent Beliefs}

\author{
Krishnamurthy Dvijotham\thanks{Caltech CMI, Engineering and 
Applied Science MC305-16, Pasadena CA 91125, USA, {\tt dvij@caltech.edu}.}
\and Yuval Rabani\thanks{The Rachel and Selim Benin School of Computer
Science and Engineering, The Hebrew University of Jerusalem,
Jerusalem 9190401, Israel, {\tt yrabani@cs.huji.ac.il}, supported in part by ISF grant 956-15, BSF grant 2012333 and I-CORE Algo.}
\and Leonard J. Schulman\thanks{California Institute of Technology,
Engineering and Applied Science MC305-16, Pasadena CA 91125, 
USA, {\tt schulman@caltech.edu}, supported in part by NSF grant 1319745 
and BSF grant 2012333.}
}
%\date{\today}

\maketitle

\begin{abstract}
In both general equilibrium theory and game theory, the dominant
mathematical models rest on a fully rational {\em solution concept}
in which every player's action is a best-response to the
actions of the other players. In both theories there is less agreement on suitable
{\em out-of-equilibrium} modeling, but one attractive approach is the
{\em level $k$ model} in which a level $0$ player adopts a very
simple response to current conditions, a level $1$ player best-responds
to a model in which others take level $0$ actions, and so forth.
(This is analogous to $k$-ply exploration of game trees in AI, and
to receding-horizon control in control theory.) If players have deterministic
mental models with this kind of finite-level response, there is obviously no way their mental models can all be
consistent. Nevertheless, there is experimental evidence that people
act this way in many situations, motivating the question of what the dynamics of such interactions
lead to.

We address this question in the setting of Fisher Markets with \emph{constant elasticities of substitution} (CES) utilities,
in the \emph{weak gross substitutes} (WGS) regime. We show that despite the inconsistency of the mental
models, and even if players' models change arbitrarily from round to round, the market
converges to its unique equilibrium. (We show this for both synchronous
and asynchronous discrete-time updates.) Moreover, the result is computationally
feasible in the sense that the convergence rate is linear, i.e., the distance
to equilibrium decays exponentially fast. To the best of our knowledge, this
is the first result that demonstrates, in Fisher markets, convergence at any 
rate for dynamics driven by a plausible model of seller incentives. Even
for the simple case of (level $0$) best-response dynamics, where we observe
that convergence at some rate can be derived from recent results in convex
optimization, our result is the first to demonstrate a linear rate of convergence.
\end{abstract}

\thispagestyle{empty}
\newpage
\setcounter{page}{1}

%%%%%%%%%%%%%%%%%%%%%%%%%%%%%%%%%%%%%%%%%%
%%%%%%%%%    Introduction
%%%%%%%%%%%%%%%%%%%%%%%%%%%%%%%%%%%%%%%%%%
\section{Introduction}

\paragraph{Motivation}
This paper deals with the question of why, and whether, a model
of interacting strategic agents converges to equilibrium. We
study this question in Fisher markets, under conditions where 
market equilibrium is unique and finding it is computationally tractable. Over
the years and in particular recently, several game and market
dynamics have been studied, but they fall short of modeling  
the key scenario which we attempt to address. 

In particular, in game theory,
dynamics are studied in the context of repeated games. Extensive 
form solution concepts such as subgame perfect or sequential
equilibria assume that the agents unravel the entire evolution
of the game and choose in advance their entire play optimally.
This is likely to be computationally infeasible (e.g.~\cite{BorgsCIKMP10},
but see in contrast~\cite{HalpernPS14}). The strategies
are unrealistically prescient of the distant future, contradicting
experience and hindering on-the-fly adaptation to unexpected
changes. Just as importantly, since the entire play is determined a-priori
and an equilibrium is played throughout, such concepts do not
capture out-of-equilibrium behavior (that may lead to equilibrium
over time), so they are in fact a static notion. 

Walrasian 
{\em t\^atonnement}, and more generally game theoretic 
{\em learning} dynamics (a.k.a. no-regret dynamics), are an
alternative approach. These are truly dynamic, out-of-equilibrium frameworks, that
can be shown in many cases to converge to an attractive
solution concept. However, the reactions of the agents have 
to be damped carefully for a desirable outcome to materialize;
such reactions lack strategic justification
(see~\cite{BlumM07} and the references therein, e.g.,~\cite{HartM00}).

Closer to our work, various formulations of bounded rationality
have provided a rich basis for progress in game theory and, over
the last two decades, in its algorithmic aspects. The most basic 
approach in this vein is {\em best-response} dynamics. Agents 
play myopically an optimal move at each round, assuming that the
other agents will not deviate from their 
existing strategy.
\footnote{The situations where best-response is known to lead to 
an attractive outcome are tightly connected to the concept of 
{\em potential games}. 
See~\cite{MondererS96,AwerbuchAEMS08,ChienS11}. For a damped 
version, {\em logit} dynamics, see~\cite{AulettaFPPP15}. For a general
discussion of best-response and the related {\em fictitious play}
dynamics, see~\cite{ShohamL09}.}
A strategy which is somewhat 
more sophisticated than best-response is limited-depth exploration 
of an extensive form game tree. This is an approach to complex 
games that was developed in the early days of AI (the exploration 
depth is sometimes called the {\em ply} of a search). Essentially
the same concept is known in control theory as {\em receding-horizon 
control}. This is in contrast with the full-rationality approach underlying 
solution concepts such as the aforementioned subgame perfect or 
sequential equilibria.

In game theory, the idea that people compete by pursuing limited-lookahead situational analysis goes under the rubric of the {\em level $k$ model}, initiated
 by~\cite{stahlW94,stahlW95} and~\cite{nagel95}; related ideas are also known as 
{\em cognitive hierarchy}, {\em higher-order rationality}, and 
{\em bounded depth of reasoning}. 
The idea has been subjected to many experimental 
tests---see~\cite{hoCW98,cgCB01,crawford03,camererHC04,cgC06,crawfordI07a,crawfordI07b}---and has emerged with considerable support. For recent theoretical work on the model see~\cite{strzalecki14,kneeland,dCSS,gorelkina}; for a survey see~\cite{ccgi13}.

In view of the above, it is important to study the dynamics and
stability of markets composed of agents each of whom performs
some limited lookahead and plays optimally against that forecast.
Limited lookahead means that each agent $j$ has a mental model
of each other agent $k$, where $k$ looks ahead some constant number
of steps, and based on that chooses an optimal action (according to $j$'s
perception). Based on this model, $j$ chooses a move that is optimal 
conditional on those other imagined actions. The paradox of endless 
self-reference is obvious here, and is precisely the point of the exercise: 
such a model does not make sense for infinitely-intelligent agents who 
possess perfect common knowledge of the properties of the market. 
But such agents do not exist. Instead, the model is consistent with 
experience that markets are composed of many agents who, despite 
having limited ability to predict the actions of others, do their best to 
make such a prediction and then respond optimally to their own prediction. 
This is a very different approach to agent choice than the ``solution concept" 
notion on which game theory rests: Nash equilibria, correlated equilibria, 
the core, and so forth. In particular, one difference is that in contrast with
full rationality, in the limited lookahead case the beliefs of the agents are 
not necessarily consistent with each other and with reality. In fact, they may
even be self-inconsistent across time steps. From a purely mathematical perspective these inconsistencies might appear to be a fatal flaw. We hold differently, that this is part of the challenge of modeling out-of-equilibrium strategic play. The market is out of equilibrium 
{\em because} players do not have perfect models of each other, or because they are uncertain about exogenous factors a few steps into the future. We further hold that the predictive power in experiments of the {\em level $k$ model} is ample reason to study its dynamics.
%the stability and convergence properties of markets in which it is employed. 
That is what we do here (and for a more general notion of
best-response with lookahead).

\paragraph{Our results}
This paper is devoted to studying the dynamics and stability
of markets where the agents model their
peers as using limited lookahead. We focus on one of the best-understood
cases of general equilibrium theory, namely Fisher markets that
consist of sellers of goods and buyers endowed with budgets.

In fact, we develop a general framework that shows convergence
of dynamics based on limited lookahead, only requiring certain abstract conditions on the updates of the players. A concrete special case is a Fisher market
in which the buyers generate demand due to utilities that exhibit 
{\em constant elasticity of substitution} (CES), in the {\em weak
gross substitutes} (WGS) regime. (Rigorous definitions await
Sections~\ref{sec: preliminaries} and~\ref{sec: concrete}.)
CES utilities were chosen because this setting is very well 
understood in the context of discrete-time t\^atonnement 
(see~\cite{CCD13}), and this gives us some comparative
perspective. The restriction to WGS was imposed because 
otherwise even staying at a market equilibrium cannot be 
reasoned by individual sellers best-responding to the equilibrium.
%, so intuitively motivated responses leading to equilibrium require an alternative justification.

Our dynamic model focuses on the sellers. Each seller
controls and sets the price of a unique single good. The buyers
are assumed to react instantly and myopically to current prices
by adjusting their demand to optimize their utilities subject to
their budgets. This assumption can be justified, for instance, by
assuming that there is a large number of buyers, each contributing
negligibly to the demand.
%, or by assuming that the buyers model types with a fixed flow of one-time customers, etc. 
Each seller
is assumed to form a belief on the next move of each of the
other sellers, and then to choose a price that optimizes its own
profit based on this belief. We analyze a rather general
belief formation process that includes, as a special case,
beliefs based on assuming that the other sellers use limited
lookahead. We assume neither consistency among the
beliefs formed by different sellers, nor consistency
among the beliefs formed by the same seller at different
times. This includes as a special case, but is considerably
more general than, {\em level $k$} choices. See Figure~\ref{genl-fict-play}.
We refer to dynamics of this sort as 
{\em best-response with lookahead} (abbreviated BRL) dynamics.

\begin{figure}
\includegraphics[height=150mm]{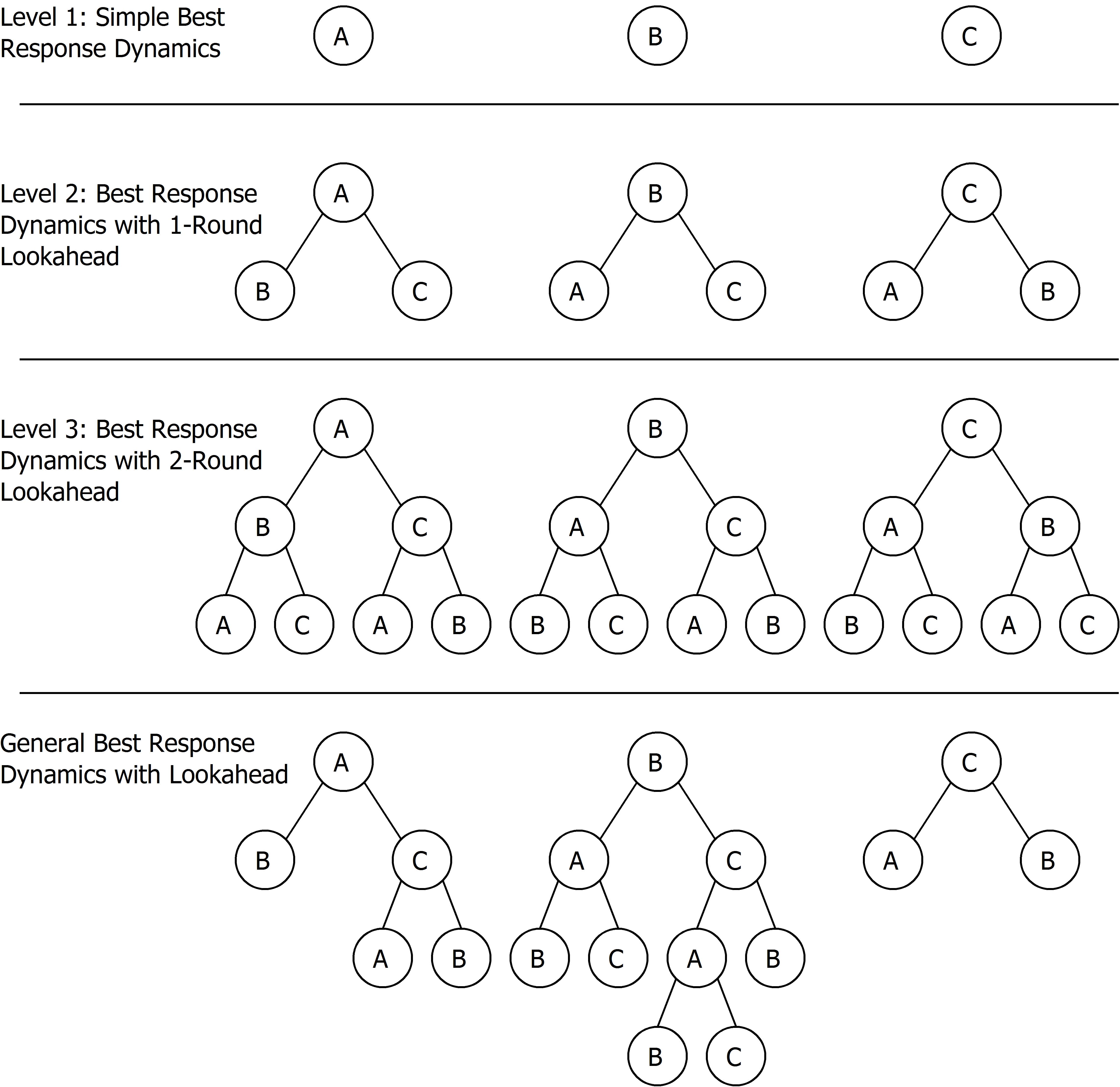}
\caption{Various collective mental models for one round of play in a 3-seller market. A leaf ({\em level $1$}) denotes best-response dynamics. In {\em level $2$} dynamics everyone 
best-responds to everyone's best-response to current prices. Players' beliefs can be far more complex. In the last example C plays by a {\em level $2$} model while A and B have more elaborate mental models.}
\label{genl-fict-play}
\end{figure}

% Bounded ply (or even just) 
BRL dynamics, and even the special case of best-response 
(with no lookahead), can be quite volatile,
as compared with usual t\^atonnement processes, because
of the absence of any damping factor. Despite the volatility and the potential
inconsistency of beliefs, we show that 
regardless of the specifics of the beliefs formed by the agents, 
the dynamic converges rapidly to market equilibrium. More precisely, we analyze two 
versions of our process. In the synchronous case, all sellers 
update prices simultaneously. In this case, the distance
to equilibrium decays exponentially in the number of steps 
(a.k.a.\ {\em linear convergence}).
In the asynchronous case, at each time step only a subset
of one or more sellers update prices. In this case, the distance 
to equilibrium decays exponentially in the number of epochs, 
where an epoch consists of time intervals in which all the sellers 
update at least once. 

To the best of our knowledge, convergence, and definitely linear
convergence, was not previously demonstrated even for the simplest 
version of our process, namely best-response. Our proof of convergence 
relies on showing that in a judiciously chosen metric (the Thompson 
metric), the BRL dynamics form a contraction map.

\paragraph{Related work}
General equilibrium theory is the principal framework through 
which economists understand the operation of markets 
(see~\cite{McKenzie02,Mukherji02}). It is one of the great
achievements of economic theory in general and of mathematical
modeling of microeconomics in particular. The theory is
largely responsible for the governing paradigm that a state
of {\em equilibrium} which the participants in economic
exchange do not wish to deviate from individually
is under mild conditions attainable~\cite{ArrowD54,McKenzie54}
(see also~\cite{Hildenbrand98}),
and that this is normally roughly the state of the economy.
This is a paradigm that can be observed ``in the field''
and also reproduced in controlled experiments, and it
lends credence and concreteness to the famed
{\em invisible hand} metaphor.

In contrast, there is less agreement
on an effective explanation as to {\em why} markets tend to
reach a state of equilibrium. This is a question about the
{\em stability} or {\em out-of-equilibrium} behavior of
markets. It is important because in reality economic
conditions are not static. They vary continually and
suffer serious ``shocks'' occasionally. So justifying an
equilibrium outcome requires a dynamic that moves
an economy at disequilibrium back to a new
equilibrium, and does so sufficiently quickly that the
periods of disequilibrium due to fluctuations are relatively
negligible (see~\cite{Dix90}). The classical mechanism
proposed to explain general equilibrium is Walrasian
t\^atonnement~\cite{Walras74}, a process that reacts
to excess demand by raising the price and to excess
supply by reducing the price. Variants of t\^atonnement
are known to converge to equilibrium, at least in some
classes of markets including those we consider here
(e.g.~\cite{Samuelson41,ABH59,CCD13}). However,
the classical view of
t\^atonnement posits the existence of an imaginary
``auctioneer'' who controls the process by announcing
prices. Recent work on the convergence of discrete-time
t\^atonnement in Fisher markets attempts to present it
as an in-market process in the context of the so-called
{\em ongoing markets}~\cite{CF08,CCR12}. However, even
this attempt requires a somewhat careful choice of the
magnitude of the price adjustment which is not motivated
by any agent considerations (aside from a common
inexplicable passion to equilibrate the economy).
Thus, the difficulty is in formulating
a theory of out-of-equilibrium behavior that makes sense
in terms of the incentives of the participants. 

In is well-known that market equilibria in Fisher markets
with CES utilities can be expressed as solutions to a
convex program, first proposed by Eisenberg and Gale
(see~\cite{JainV10}).
We observe that best-response dynamics (i.e., the simplest
example of our setting) can, in fact, be explained as a specific
implementation of coordinate descent (in the dual program).
The convergence of coordinate descent
was established in~\cite{tseng2001convergence},
without bounds on the rate. Recently, \cite{saha2013nonasymptotic} 
established a sublinear convergence rate (the distance to the optimum 
decays linearly with the number of iterations), if the objective function
satisfies some conditions. We note that the objective
function of the dual Eisenberg-Gale program satisfies these conditions.
Our general result shows a linear convergence rate (the 
distance to equilibrium decays exponentially in the number of 
iterations), and this holds in particular in the case of best-response.
To the best of our knowledge, this is not implied by previous results.

Two recent papers consider market dynamics under strategic 
behavior. Both bound the fraction of optimal welfare that is guaranteed. 
In~\cite{BabaioffLNP14}, strategic buyers play a Nash (or Bayesian) 
equilibrium in a market in which the sellers' prices are determined by  
Walrasian t\^atonnement; note that here the t\^atonnement 
is part of the {\em mechanism} defining the game, rather than the agents' strategies.
% and not a form of  repeated play. 
In~\cite{BabaioffPS15}, sellers engage in best-response 
dynamics. In this setting the market does not actually have an equilibrium,
but a fraction of the optimal welfare can be extracted by the dynamic.
In both papers the market model is quite different from ours. 

In the game 
theory setting (as opposed to markets), best-response dynamics have 
been studied extensively in recent years, mostly concerning bounds on
the quality of the play and conditions
that imply or prevent convergence to a Nash equilibrium~\cite{Mirrokni2004,Roughgarden15,FanelliFM12,EngelbergFSW13}.
The paper~\cite{NisanSVZ11} investigates conditions under which 
best-response is a fully rational strategy.

%%%%%%%%%%%%%%%%%%%%%%%%%%%%%%%%%%%%%%%%%%
%%%%%%%%%    Preliminaries
%%%%%%%%%%%%%%%%%%%%%%%%%%%%%%%%%%%%%%%%%%
\section{Preliminaries}\label{sec: preliminaries}

\paragraph{The market model}
We consider a Fisher market with $n$ perfectly divisible goods
and $m$ buyers. Each good is initially owned by a unique seller
that controls its price, and its quantity is scaled to $1$.  
The utility of that seller is the price times $\min\{\text{demand},1\}$.
The buyers
respond instantly and myopically to price changes. Thus their role
in the process is to specify in a convenient way the demands for
the goods at any given assignment of prices to those goods. This
is done as follows. Each buyer $i$ is endowed with a positive
budget $b_i$ and a utility function $u_i$ over baskets of goods
$x$.

For a price vector $p$, we write $p > 0$ to indicate that all
the prices are strictly positive. Similarly for price vectors $p,q$,
we write $p>q$ (resp.\ $p \geq q$) if $p_j>q_j$ (resp.\ $p_j\geq q_j$)
for all $j$.

Given a price vector $p > 0$, the demands for the goods are
determined as
follows. Every buyer $i$ chooses $x_i$ to optimize the utility
function $u_i(x)$, subject to the budget constaint
\begin{align}
\sum_j p_j x_{ij}\leq b_i.\label{eq:Budget}
\end{align}
We denote the utility maximizing allocations for prices $p$
by $x(p)$. The demand for each good $j$ at prices $p$
is $\sum_i x_{ij}(p)$.

\paragraph{Price updates}
In general, a market dynamic is based on an update rule for
each seller that determines its new price. The rules can then
by applied synchronously to all sellers, or serially to one
seller at a time in some order. We will discuss these variations
later. For now, we focus on the update rules. An update rule
can take into account some or all of the dynamic history leading
to the current state (including the current prices), and also
some internal state of the seller that takes other factors into
account. We are interested in update rules that depend on
the current price vector (and any other parameters),
and are {\em monotone}, {\em sub-homogeneous},
{\em price-bounded}, and {\em positive} with respect to that price vector.
To define these properties formally, let $F^{\iota}_j(p)$ denote the
new price of seller $j$, given current prices $p$, and $\iota$
encoding all the other relevant parameters (if any). Then,
\begin{definition}[monotonicity, sub-homogeneity, price-boundedness, positivity]
We say that:
\begin{itemize}
\item $F^{\iota}_j$ is monotone if for all pairs
of price vectors $p,q$ such that $p\ge q$ coordinate-wise,
$F^{\iota}_j(p)\ge F^{\iota}_j(q)$;

\item $F^{\iota}_j$ is sub-homogeneous if for all
price vectors $p$ and for all $\lambda\in (0,1)$,
$F^{\iota}_j(\lambda p)\ge \lambda F^{\iota}_j(p)$,
also $F^{\iota}_j$ is strictly sub-homogeneous iff
the inequality is strict for all $p > 0$;

\item $F^{\iota}_j$ is
$[p_{\min},p_{\max}]$-price-bounded if
for all price vectors $p\in [p_{\min},p_{\max}]^n$,
$F^{\iota}_j(p)\in [p_{\min},p_{\max}]$.

\item $F^{\iota}_j$ is positive if $p_{\min} > 0$.
\end{itemize}
\end{definition}

For a price vector $p$ and price updates $F_j$
for all $j\in [n]$, we denote by $F(p)$ the price
vector derived by applying the updates simultaneously
to $p$. We say that $F$ has a property (e.g.,
is monotone) if all of its components have this property.
\begin{lemma}\label{lm: composition}
Fix $p_{\min},p_{\max}$, and suppose that
$F:[p_{\min},p_{\max}]^n\rightarrow [p_{\min},p_{\max}]^n$
and $g:[p_{\min},p_{\max}]^n\rightarrow [p_{\min},p_{\max}]$
are both monotone, sub-homogeneous, and
$[p_{\min},p_{\max}]$-price bounded updates. Then
so is $g\circ F$. Moreover, if $g$ is strictly sub-homogeneous
and $F$ is positive, then $g\circ F$ is also strictly sub-homogeneous.
\end{lemma}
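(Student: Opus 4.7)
The plan is to verify each of the four properties of $g\circ F$ in turn, in each case by composing the corresponding property of $g$ with that of $F$, and using the monotonicity of $g$ as the ``glue'' that lets us push inequalities through the outer function.

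First I would dispose of the two easiest cases. For \emph{monotonicity}, given $p\ge q$, apply monotonicity of $F$ componentwise to get $F(p)\ge F(q)$, then apply monotonicity of $g$ to conclude $g(F(p))\ge g(F(q))$. For \emph{price-boundedness}, price-boundedness of $F$ sends $[p_{\min},p_{\max}]^n$ into itself, and then price-boundedness of $g$ sends this image into $[p_{\min},p_{\max}]$, so the composition maps $[p_{\min},p_{\max}]^n$ into $[p_{\min},p_{\max}]$.

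Next I would handle \emph{sub-homogeneity}. For $\lambda\in(0,1)$ and a price vector $p$, sub-homogeneity of $F$ gives the componentwise inequality $F(\lambda p)\ge \lambda F(p)$. Applying monotonicity of $g$ yields $g(F(\lambda p))\ge g(\lambda F(p))$, and then sub-homogeneity of $g$ on the vector $F(p)$ gives $g(\lambda F(p))\ge \lambda g(F(p))$, chaining to $g(F(\lambda p))\ge \lambda g(F(p))$ as required.

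Finally, for the \emph{strict} sub-homogeneity claim, the only real subtlety is ensuring that the strict version of $g$'s sub-homogeneity actually applies at the point where we invoke it. Strict sub-homogeneity of $g$ is only guaranteed on arguments that are strictly positive, so I need the inner vector $F(p)$ to satisfy $F(p)>0$ componentwise. This is exactly what positivity of $F$ buys: since $F$ is $[p_{\min},p_{\max}]$-price-bounded with $p_{\min}>0$, every coordinate of $F(p)$ lies in $[p_{\min},p_{\max}]$ and is therefore strictly positive. Thus I may replace the second inequality in the chain above by the strict inequality $g(\lambda F(p))>\lambda g(F(p))$, giving strict sub-homogeneity of $g\circ F$. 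The whole argument is essentially bookkeeping; the main ``obstacle,'' such as it is, is recognizing that positivity of $F$ is invoked precisely to unlock the strict inequality in $g$, and not at any earlier point.
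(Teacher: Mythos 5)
Your proposal is correct and follows essentially the same route as the paper's proof: monotonicity and price-boundedness by direct composition, sub-homogeneity via the chain $g(F(\lambda p))\ge g(\lambda F(p))\ge \lambda g(F(p))$, and positivity of $F$ (together with price-boundedness) invoked exactly to guarantee $F(p)>0$ so that the strict inequality of $g$ applies. No gaps.
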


\begin{proof}
By the monotonicity of $F$, if $p \ge q$ coordinate-wise,
then $F(p) \ge F(q)$ coordinate-wise. Therefore, by
the monotonicity of $g$, we have that $g(F(p))\ge g(F(q))$.
Next,
$g(F(\lambda p)) \ge g(\lambda F(p))\ge \lambda g(F(p))$,
where the first inequality uses the
monotonicity of $g$ and the
sub-homogeneity of $F$,
 and the second inequality uses the
sub-homogeneity of $g$. Moreover, if $g$ is strictly
sub-homogeneous, then the second inequality is strict
if $F(p) > 0$, which is implied when $p > 0$ by
the assumption that $F$ is positive.
Finally, using the price boundedness of both $F$ and $g$,
if $p\in [p_{\min},p_{\max}]^n$, then
$F(p)\in [p_{\min},p_{\max}]^n$, so $g(F(p))\in [p_{\min},p_{\max}]$.
\end{proof}

\paragraph{Belief formation}
We consider dynamics where each seller updates its
price according to a belief of which prices the other sellers
will set in the next step. The beliefs that are formed by
different sellers or by the same seller at different
times need not be consistent.
We show that despite this inconsistency, the dynamics still
converge to equilibrium, assuming that the ingredients
satisfy certain properties.
In general, a belief
$\pi$ is a function that maps a pair $(p,\iota)$,
where $p$ is the current price vector and $\iota$
is the internal state of the seller,
to the believed price vector.

We now discuss a rather general framework of forming
such beliefs. This framework in particular enables the
sellers to form {\em level $k$} best-response beliefs, and
more general best-response beliefs. We haven't yet formally
defined ``best-response'', but for now it
suffices to assume that there is at our disposal a price
update called best-response.
% We will not need too many details here.
%%YR: changed
%%A concrete
The details of best-response update
are discussed in Section~\ref{sec: concrete}. Also,
in order to get some intuition on the following explanation,
it might be useful
to visualize the trees in the bottom example in
Figure~\ref{genl-fict-play}.

We explain how seller $j$ forms a belief $\pi = \pi^{\iota}$.
The idea is that seller $j$ has, for every other
seller $k$, a mental model $\iota_k$ of the update rule
that $k$ employs, and $\pi_k$ is simply the
price that $\iota_k$ generates. Of course, in order to
form $\iota_k$, seller $j$ must also imagine
seller $k$'s mental models for all $k'\ne k$ (this
includes $j$). So, we define inductively a set
of possible mental models of seller updates, and
seller $j$ simply picks each $\iota_k$ from this set.
The set ${\cal M}$ of mental models consists of levels;
${\cal M} = \bigcup_{s=0}^{\infty}{\cal M}_s$.
They are defined inductively as follows. The base
case, level $0$, is ${\cal M}_0$ that contains a single model
of staying put at the current price. 
 Inductively, a mental model or belief $\iota$ for player $j$ is formed by selecting any $\iota_{k_1},\ldots,\iota_{k_n}$ (but there is no $\iota_j$); the price update defined by this mental model is player $j$'s best-response to the prices generated by all the other players if they act with the assigned mental models on the basis of the current prices. The \textit{level} of $\iota$ is one more than the maximum level of $\iota_{k_1},\ldots,\iota_{k_n}$.

\suppress{
\ljs{Previous text is suggested replacement for the rest of this OLD TEXT:}
The inductive step
includes in ${\cal M}_s$ all the models $\iota_k$ that can
be generated as follows. For every $k'\ne k$, pick a
model $\iota_{k,k'}\in\bigcup_{r=0}^{s-1}{\cal M}_{r}$.
%%YR: I edited this sentence, mostly to fix a
%% typesetting eyesore, but also for clarity.
This defines $\iota_k$, which is $j$'s model of $k$;
according to $j$'s model, seller $k$ forms a belief
$\pi^{\iota_{k,\cdot}}$ based on the models
$\iota_{k,k'}$ for all $k'\ne k$, and updates by
best-responding to this belief.
Notice that there is an intuitive notion of ``depth'' that
can be associated with a mental model $\iota$, which
is the minimum $s$ such that for all $k$,
$\iota_k\in \bigcup_{r=0}^{s}{\cal M}_{r}$. Notice
that the depth of $j$'s model $\iota$ is greater than the
depth of any model $\iota_{k,\cdot}$ that $j$ assumes
the other sellers have.
%; this is why the beliefs must be collectively inconsistent.
\ljs{OLD TEXT TILL HERE}
}

%%YR: moved this paragraph here
We note in passing that beliefs thus formed, implicitly model
sellers with epistemic assumptions that they are a bit smarter
than their peers---every seller $j$ updates with one extra step
beyond the maximum number of steps used in $j$'s mental
model $\iota$. Of course, such beliefs cannot possibly be
consistent among sellers (unless they are children in Lake
Wobegon).

The following lemma states the desired properties of
belief formation.
\begin{lemma}\label{lm: belief-based updates}
Fix $p_{\min},p_{\max}$. Suppose that best-response
is monotone, sub-homogeneous,
$[p_{\min},p_{\max}]$-price bounded and positive. Further suppose
that seller $j$ uses a monotone, strictly sub-homogeneous,
and $[p_{\min},p_{\max}]$-price bounded update function
$F_j$, and given current prices $p$, updates to
$F^{\iota}_j(p) = F_j(\pi^{\iota}(p))$.
Then, $F^{\iota}_j$ is monotone, strictly sub-homogeneous,
%%YR: In the end, I couldn't resist the temptation to add the footnote.
and $[p_{\min},p_{\max}]$-price-bounded.\footnote{In fact, the
conclusion of Lemma~\ref{lm: belief-based updates} holds even
for beliefs that are formed by a set of monotone, sub-homogeneous,
price bounded, and positive price updates, instead of a single
such update. In the tree view of belief formation, such a set
is used by choosing, for each node of the tree, an arbitrary
member of the set as the modeled action. To simplify the exposition,
we do not elaborate on this generalization.}
\end{lemma}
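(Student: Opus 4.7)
The plan is to reduce the claim to Lemma~\ref{lm: composition} by induction on the level of the mental models that together compose $\iota$.

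The inductive claim I would prove is: for every single-seller mental model $\mu \in \mathcal{M}$ (i.e., a model that some player assigns to some seller $k$), the scalar map $p \mapsto \pi^{\mu}(p) \in [p_{\min},p_{\max}]$ that it generates is monotone, sub-homogeneous, $[p_{\min},p_{\max}]$-price-bounded, and positive. The base case $\mathcal{M}_0$ is the ``stay at the current price'' model $p \mapsto p_k$: this coordinate projection trivially satisfies all four properties, with positivity coming from $p_{\min}>0$. For the inductive step, a level-$s$ model $\mu$ for seller $k$ is built from level-$\le s-1$ sub-models $\{\mu_{k'}\}_{k' \ne k}$, and
\[
\pi^{\mu}(p) \;=\; \mathrm{BR}_k\bigl((\pi^{\mu_{k'}}(p))_{k' \ne k}\bigr).
\]
By the inductive hypothesis each scalar $\pi^{\mu_{k'}}$ has the four properties; assembling them into a vector-valued map and padding the $k$-th coordinate with the projection $p \mapsto p_k$ produces an inner map $F : [p_{\min},p_{\max}]^n \to [p_{\min},p_{\max}]^n$ that is coordinatewise monotone, sub-homogeneous, $[p_{\min},p_{\max}]$-price-bounded, and positive. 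Taking $g = \mathrm{BR}_k$---which has all four properties by hypothesis---Lemma~\ref{lm: composition} yields that $\pi^{\mu} = g \circ F$ is monotone, sub-homogeneous, and $[p_{\min},p_{\max}]$-price-bounded; positivity is automatic since the output lies in $[p_{\min},p_{\max}]$ with $p_{\min}>0$. This closes the induction.

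Applying the inductive claim componentwise, seller $j$'s belief $\pi^{\iota}:p\mapsto(\pi^{\iota_k}(p))_{k\ne j}$ (again padded to an $n$-vector, e.g.\ with $p_j$ in position $j$) is itself a monotone, sub-homogeneous, $[p_{\min},p_{\max}]$-price-bounded, and positive update. A final invocation of Lemma~\ref{lm: composition}, with inner $F=\pi^{\iota}$ (positive) and outer $g=F_j$ (monotone, strictly sub-homogeneous, $[p_{\min},p_{\max}]$-price-bounded by hypothesis), then gives the three properties claimed for $F^{\iota}_j = F_j \circ \pi^{\iota}$. The ``moreover'' clause of Lemma~\ref{lm: composition} is essential here: upgrading $F_j$'s strict sub-homogeneity to strict sub-homogeneity of the composition is precisely what required the inductive claim to carry positivity along.

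The main obstacle I anticipate is purely bookkeeping---matching dimensions so that Lemma~\ref{lm: composition}, which is stated for inner maps from and to the same space $[p_{\min},p_{\max}]^n$, applies even though the natural pieces of the belief tree are scalar- or $(n{-}1)$-vector-valued. The uniform ``pad with the appropriate coordinate projection'' device resolves this at every level of the recursion, and no analytic ingredient beyond Lemma~\ref{lm: composition} and the trivial base case is needed.
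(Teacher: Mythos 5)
Your proposal is correct and follows essentially the same route as the paper's own (much terser) proof: induction on the level of the mental model via Lemma~\ref{lm: composition} to establish that $\pi^{\iota}$ is monotone, sub-homogeneous, and price-bounded, followed by one more application of Lemma~\ref{lm: composition} with $g=F_j$. Your explicit tracking of positivity through the induction (so that the ``moreover'' clause yields strict sub-homogeneity of the final composition) and the coordinate-padding bookkeeping are details the paper leaves implicit, but they introduce no new ideas.
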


\begin{proof}
Since staying put at the current price is monotone, sub-homogeneous, and
$[p_{\min},p_{\max}]$-price-bounded, a simple induction
on $s$ using Lemma~\ref{lm: composition} shows that
$\pi^{\iota}$ is monotone, sub-homogeneous, and
$[p_{\min},p_{\max}]$-price-bounded. One more application
of Lemma~\ref{lm: composition} gives the desired properties
of $F^{\iota}_j$.
\end{proof}

%Figure~\ref{fig:BeliefSpace}
%illustrates what happens when the functions are each chosen
%to be one of these two types.

%\begin{figure}
%\centering
%\includegraphics[width=.8\textwidth]{BeliefSpace.png}
%\caption{Belief space: Arrows represent players that decided to update in that round}\label{fig:BeliefSpace}
%\end{figure}

%%%%%%%%%%%%%%%%%%%%%%%%%%%%%%%%%%%%%%%%%%
%%%%%%%%%    Concrete Example
%%%%%%%%%%%%%%%%%%%%%%%%%%%%%%%%%%%%%%%%%%
\section{Concrete case of CES-WGS markets}\label{sec: concrete}

\paragraph{CES utilities}
These are utility functions of the form
\begin{align}
u_i(x) = \left(\sum_j \left(c_{ij}x_{ij}\right)^{\rho}\right)^{\frac{1}{\rho}},\label{eq:Util}
\end{align}
where $x_{ij}$ denotes the quantity of good $j$ that buyer $i$
purchased. The parameter $\rho\in (-\infty,0)\cup (0,1)$ is assumed,
for simplicity, to be uniform for all buyers. These utility functions
are known as {\em constant elasticity of substitution} (CES) utilities.
When $\rho\in (0,1)$, the goods are
{\em weak gross substitutes} (WGS). In the case of $\rho\in (-\infty,0)$,
the goods are complementary.

For CES utilities, the utility-maximizing
allocations are given explicitly by the equation
\begin{align}
x_{ij}(p) = \frac{b_i}{p_j}\cdot\frac{\left(c_{ij}/p_j\right)^{\eps}}
    {\sum_k \left(c_{ik}/p_k\right)^{\eps}},\label{eq:Optx}
\end{align}
where $\eps = \frac{\rho}{1-\rho}$.
Notice that if $\rho\in (0,1)$, then $\eps \in (0,\infty)$.
In this case, the demand satisfies the following property.\footnote{The
proof of this fundamental fact is rather trivial, but we are not aware 
of a good reference. Notice that the same proof
shows that if the CES utilities are complementary ($\eps < 0$), then 
the spending on good $j$ is monotonically increasing in $p'_j$. This 
is the motivation for considering only WGS utilities.}
\begin{lemma}\label{lm: profit monotonicity} Let the utilities be CES in the WGS regime.
Fix a price vector $p$ and a good $j$. Consider all price
vectors $p'$ with the property that for all $k\ne j$, $p'_k = p_k$.
Among these price vectors, the total desired spending
$\sum_i x_{ij}(p')\cdot p'_j$ on good $j$ is monotonically
decreasing in $p'_j$.
\end{lemma}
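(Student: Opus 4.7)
\begin{proofsketch}
The plan is to substitute the explicit demand formula~(\ref{eq:Optx}) into the spending expression and rearrange so that the dependence on $p'_j$ is captured by a single monotone term; the claim will then be immediate.

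First I would write each buyer's spending on good $j$ as
\[
x_{ij}(p')\cdot p'_j \;=\; b_i\cdot\frac{(c_{ij}/p'_j)^{\eps}}{\sum_k (c_{ik}/p'_k)^{\eps}}.
\]
Second, since $p'_k = p_k$ for every $k\neq j$, the quantity
\[
B_i \;:=\; \sum_{k\neq j}\left(\frac{c_{ik}}{p_k}\right)^{\eps}
\]
does not depend on $p'_j$. Assuming $c_{ij} > 0$ (the case $c_{ij}=0$ is trivial, since buyer $i$'s spending on $j$ is then identically zero), I would divide the numerator and denominator of the fraction above by $(c_{ij}/p'_j)^{\eps}$ to obtain
\[
x_{ij}(p')\cdot p'_j \;=\; \frac{b_i}{1 + (B_i / c_{ij}^{\eps})\,(p'_j)^{\eps}}.
\]

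The key observation is that WGS means $\rho\in(0,1)$, so $\eps = \rho/(1-\rho) > 0$; consequently $(p'_j)^{\eps}$ is strictly increasing in $p'_j$. Since $B_i / c_{ij}^{\eps} \geq 0$, the right-hand side is a non-increasing function of $p'_j$, and is strictly decreasing whenever $B_i > 0$. Summing over $i$ yields the stated monotonicity of $\sum_i x_{ij}(p')\cdot p'_j$ in $p'_j$.

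There is no real obstacle here; the proof is pure algebra enabled by the closed-form~(\ref{eq:Optx}). The only thing worth flagging is where the WGS assumption is used: it is precisely the sign of $\eps$ that makes $(p'_j)^{\eps}$ increase with $p'_j$, so that raising $p'_j$ increases the denominator above. Under complementarity ($\eps<0$), this same calculation instead gives monotone \emph{increase} of spending in $p'_j$, explaining the footnote's remark about why WGS is the natural regime for the seller incentive story.
\end{proofsketch}
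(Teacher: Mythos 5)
Your proposal is correct and follows essentially the same route as the paper: both substitute the closed-form demand~(\ref{eq:Optx}) into the spending, isolate the dependence on $p'_j$, and use $\eps>0$ to conclude monotonicity. The only difference is that the paper differentiates the resulting expression and checks the sign, whereas you rewrite each term as $b_i/(1+(B_i/c_{ij}^{\eps})(p'_j)^{\eps})$ and read off the monotonicity directly --- a calculus-free variant of the same argument that, if anything, handles the degenerate case $B_i=0$ a bit more explicitly than the paper does.
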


\begin{proof}
Using Equation~\eqref{eq:Optx},
the total desired spending on $j$ is given by the equation
$$
\sum_i x_{ij}(p')\cdot p'_j =
    \sum_i b_i\cdot\frac{\left(c_{ij}/p'_j\right)^{\eps}}
    {\left(c_{ij}/p'_j\right)^{\eps}+\sum_{k\ne j} \left(c_{ik}/p_k\right)^{\eps}}.
$$
The derivative of the right-hand side with respect to $p'_j$
is
$$
-\eps\cdot \sum_i \frac{b_i}{p'_j}\cdot\left(\left(\frac{\left(c_{ij}/p'_j\right)^{\eps}}
    {\left(c_{ij}/p'_j\right)^{\eps}+\sum_{k\ne j} \left(c_{ik}/p_k\right)^{\eps}}\right)
    - \left(\frac{\left(c_{ij}/p'_j\right)^{\eps}}
    {\left(c_{ij}/p'_j\right)^{\eps}+\sum_{k\ne j} \left(c_{ik}/p_k\right)^{\eps}}\right)^2\right).
$$
This expression is negative, because
$\frac{\left(c_{ij}/p'_j\right)^{\eps}}
    {\left(c_{ij}/p'_j\right)^{\eps}+\sum_{k\ne j} \left(c_{ik}/p_k\right)^{\eps}} < 1$.
\end{proof}

\begin{corollary}\label{cor: profit maximality}
Using the same notation as in Lemma~\ref{lm: profit monotonicity},
the profit $\min\left\{\sum_i x_{ij}(p'),1\right\}\cdot p'_j$ of seller $j$
is maximized at the price $p'_j$ for which the demand $\sum_i x_{ij}(p')$
equals $1$.
\end{corollary}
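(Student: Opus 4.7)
The plan is to reduce the maximization to a simple case analysis using Lemma~\ref{lm: profit monotonicity}. Introduce abbreviations $D(p'_j) = \sum_i x_{ij}(p')$ for the total demand and $S(p'_j) = p'_j \cdot D(p'_j)$ for the total desired spending, so that the profit can be written as
\[
\Pi(p'_j) \;=\; \min\{D(p'_j),\,1\}\cdot p'_j \;=\; \begin{cases} p'_j & \text{if } D(p'_j)\ge 1,\\ S(p'_j) & \text{if } D(p'_j)\le 1.\end{cases}
\]

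Next, I would establish the existence of a unique $p^*$ with $D(p^*)=1$. Lemma~\ref{lm: profit monotonicity} tells us that $S$ is strictly decreasing in $p'_j$, and since $D(p'_j)=S(p'_j)/p'_j$, this makes $D$ strictly decreasing as well. Inspecting the closed-form in the proof of Lemma~\ref{lm: profit monotonicity}, $D$ is continuous, with $D(p'_j)\to\infty$ as $p'_j\to 0^+$ and $D(p'_j)\to 0$ as $p'_j\to\infty$ (the latter because the factor $(c_{ij}/p'_j)^\eps$ goes to $0$ while the remaining sum stays bounded away from $0$, and additionally $1/p'_j\to 0$). By the intermediate value theorem there is a unique such $p^*$.

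Now I would finish by a direct comparison. For $p'_j<p^*$ we have $D(p'_j)>1$, hence $\Pi(p'_j)=p'_j<p^*=\Pi(p^*)$. For $p'_j>p^*$ we have $D(p'_j)<1$, hence $\Pi(p'_j)=S(p'_j)<S(p^*)=p^*\cdot D(p^*)=p^*=\Pi(p^*)$, using the strict monotonicity of $S$ from Lemma~\ref{lm: profit monotonicity}. Thus $\Pi$ attains its (unique) maximum precisely at $p^*$, the price for which $\sum_i x_{ij}(p')=1$.

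There is no real obstacle here; the only minor subtlety is checking that $D$ is strictly monotone (which follows because $S$ is strictly decreasing and $1/p'_j$ is strictly decreasing and both are positive) and that the value $D=1$ is actually achieved in the positive orthant (handled by the limit computation above). Everything else is a one-line case split.
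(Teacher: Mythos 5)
Your proof is correct and follows essentially the same route as the paper's: both reduce the problem to the monotonic decrease of demand and of spending in $p'_j$ (the latter via Lemma~\ref{lm: profit monotonicity}) and conclude the maximum sits at the market-clearing price. You merely spell out the existence/uniqueness of that price and the two-sided comparison more explicitly than the paper does.
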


\begin{proof}
By Equation~\eqref{eq:Optx}, the demand $\sum_i x_{ij}(p')$ decreases
monotonically in $p'_j$. By Lemma~\ref{lm: profit monotonicity}, also
the desired spending decreases monotonically in $p'_j$. Therefore,
the profit is maximized at the lowest price for which the demand is
at most $1$ (lowering the price further will not increase the quantity
sold beyond the initial endowment).
\end{proof}

\paragraph{Best-response updates}
In standard best-response dynamics, each seller updates its price
to maximize its revenue given the current
prices of the other players. In the particular setting of demand
that is generated by CES utilities in the WGS regime, a seller $j$
maximizes profit by setting the price $p_j$ to clear the market for
good $j$ (by Corollary~\ref{cor: profit maximality}).
I.e., if the current price vector is $p$, the seller chooses
a new price $F_j(p)$ for good $j$, so that
\begin{align}
\sum_{i=1}^m x_{ij}(p')=1,\label{eq:Seller}
\end{align}
where $p'_j = F_j(p)$, and for all $k\ne j$, $p'_k = p_k$.
More explicitly, seller $j$ best-responds by solving
for $p'_j$ the equation
\begin{align}
p'_j = \sum_{i=1}^m b_i\cdot\frac{\left(c_{ij}/p'_j\right)^{\epsilon}}
   {\sum_{k\neq j} \left(c_{ik}/p_k\right)^{\epsilon}+\left(c_{ij}/p'_j\right)^{\epsilon}}. \label{eq:BestResponse}
\end{align}
Notice that the right-hand side of Equation~\eqref{eq:BestResponse}
is simply the total spending of all the buyers on good $j$ when this
good's price is $p'_j$ and the other prices are given by the vector
$p$.
\begin{lemma}\label{lm: best response properties}
For CES utilities with $\rho\in (0,1)$,
best-response updates $F_j$ are monotone, strictly
sub-homogeneous, positive, and $[p_{\min},p_{\max}]$-price-bounded
for some $p_{\min} = p_{\min}(b,c,\rho)>0$ and
$p_{max} = p_{\max}(b)$.
\end{lemma}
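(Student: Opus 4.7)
The plan is to analyze the implicit equation \eqref{eq:BestResponse} directly. Define $f_j(t;p) = \sum_{i}b_i\frac{(c_{ij}/t)^\epsilon}{\sum_{k\ne j}(c_{ik}/p_k)^\epsilon + (c_{ij}/t)^\epsilon}$ for $t>0$, so that $F_j(p)$ is the unique positive solution of $t = f_j(t;p)$. Since $\epsilon>0$, each summand is continuous and strictly decreasing in $t$, increasing to $b_i$ as $t\to 0^+$ and decreasing to $0$ as $t\to\infty$. Hence $f_j(\cdot;p)$ is continuous, strictly decreasing, with limits $\sum_i b_i$ and $0$. Therefore $g_p(t) := f_j(t;p) - t$ is strictly decreasing from $+\infty$ to $-\infty$, giving a \emph{unique} positive root $F_j(p) \in (0,\sum_i b_i)$. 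This already provides positivity and the upper bound $p_{\max} := \sum_i b_i$ (which depends only on $b$).

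For monotonicity, observe that if $p\ge q$ coordinatewise, then for every $t$ the denominator of each term in $f_j(t;p)$ is no larger than the corresponding denominator in $f_j(t;q)$ (because $\epsilon>0$ makes $k\mapsto(c_{ik}/p_k)^\epsilon$ decreasing in $p_k$). Thus $f_j(t;p)\ge f_j(t;q)$ pointwise, so $g_p(F_j(q))\ge g_q(F_j(q))=0$, and since $g_p$ is strictly decreasing, its unique root $F_j(p)$ satisfies $F_j(p)\ge F_j(q)$. For strict sub-homogeneity, fix $\lambda\in(0,1)$ and $p>0$, and let $r=F_j(p)$. The key computation is to evaluate $f_j$ at the point $(\lambda r;\lambda p)$: a factor of $\lambda^{-\epsilon}$ appears in the numerator $(c_{ij}/(\lambda r))^\epsilon$ and in every term of the denominator $\sum_{k\ne j}(c_{ik}/(\lambda p_k))^\epsilon + (c_{ij}/(\lambda r))^\epsilon$, so it cancels, yielding $f_j(\lambda r;\lambda p) = f_j(r;p) = r$. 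Since $p>0$ gives $r>0$, we get $g_{\lambda p}(\lambda r) = r - \lambda r = (1-\lambda)r > 0$, and strict monotonicity of $g_{\lambda p}$ forces $F_j(\lambda p) > \lambda r = \lambda F_j(p)$.

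It remains to produce a uniform lower bound $p_{\min}>0$. Fix $p_{\max}$ as above; for any $p\in[0,p_{\max}]^n$ we have $\sum_{k\ne j}(c_{ik}/p_k)^\epsilon \ge p_{\max}^{-\epsilon}\sum_{k\ne j}c_{ik}^\epsilon$, hence
\[
f_j(t;p) \;\ge\; \widetilde f_j(t) \;:=\; \sum_i b_i\cdot\frac{(c_{ij}/t)^\epsilon}{p_{\max}^{-\epsilon}\sum_{k\ne j}c_{ik}^\epsilon + (c_{ij}/t)^\epsilon}.
\]
The function $\widetilde f_j$ is again continuous and strictly decreasing on $(0,\infty)$ with $\widetilde f_j(0^+)=\sum_i b_i$, so it has a unique positive fixed point $\widetilde p_j>0$. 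Because the fixed point of a pointwise-larger decreasing function is pointwise larger (using the strict monotonicity of the associated $g$ functions), $F_j(p)\ge\widetilde p_j$. Setting $p_{\min} := \min_j \widetilde p_j > 0$ (a constant depending only on $b,c,\rho$) yields the $[p_{\min},p_{\max}]$-price-boundedness, together with positivity.

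The main obstacle is the sub-homogeneity step: one must notice that the three appearances of $\lambda$ in $f_j(\lambda r;\lambda p)$ combine to a single overall factor $\lambda^{-\epsilon}$ that cancels, producing the clean identity $f_j(\lambda r;\lambda p)=r$; this is what delivers \emph{strict} sub-homogeneity (rather than just sub-homogeneity) as soon as $p>0$. The other steps are routine consequences of the strict monotonicity of $t\mapsto f_j(t;p)-t$ combined with pointwise comparisons of $f_j$.
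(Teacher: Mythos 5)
Your treatment of monotonicity, strict sub-homogeneity, positivity of the fixed point, and the upper bound $p_{\max}=\sum_i b_i$ is correct and is essentially the paper's own argument: both proofs study the strictly monotone map $t\mapsto f_j(t;p)-t$ (the paper's $g(\alpha,p)$ with the opposite sign) and compare roots via pointwise comparison of the functions; the cancellation of $\lambda^{-\epsilon}$ in $f_j(\lambda r;\lambda p)=f_j(r;p)$ is exactly the computation in the paper. (Minor quibble: $g_p(0^+)=\sum_i b_i$, not $+\infty$, but the unique-root conclusion stands.)

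The lower bound, however, has a genuine gap. For $p\in[0,p_{\max}]^n$ the inequality $\sum_{k\ne j}(c_{ik}/p_k)^{\epsilon}\ge p_{\max}^{-\epsilon}\sum_{k\ne j}c_{ik}^{\epsilon}$ makes the \emph{denominator} of each term larger, hence $f_j(t;p)\le \tilde f_j(t)$, not $\ge$; so the fixed point $\tilde p_j$ of $\tilde f_j$ is an \emph{upper} bound on $F_j(p)$, and your conclusion $F_j(p)\ge\tilde p_j$ does not follow. The error is not merely a sign slip that can be patched: no uniform positive lower bound on $F_j$ over all of $(0,p_{\max}]^n$ exists, because if some $p_k\to 0$ with $c_{ik}>0$ then buyers shift spending to good $k$, the spending $f_j(t;p)\to 0$ for every fixed $t>0$, and the clearing price $F_j(p)\to 0$. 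What is needed (and what the paper proves) is the \emph{self-consistent} statement that the box $[p_{\min},p_{\max}]^n$ is invariant: assume all coordinates of $p$ are at least $p_{\min}$ and suppose the price of good $j$ were some $t<p_{\min}$; then $p_k\ge p_{\min}>t$ gives $(c_{ik}/p_k)^{\epsilon}\le(c_{ik}/t)^{\epsilon}$, so any buyer $i$ with $c_{ij}>0$ already demands at least $\frac{b_i}{t}\cdot\frac{c_{ij}^{\epsilon}}{\sum_k c_{ik}^{\epsilon}}>1$ once $p_{\min}$ is chosen at most $\frac{b_i c_{ij}^{\epsilon}}{\sum_k c_{ik}^{\epsilon}}$; hence the market-clearing price cannot lie below $p_{\min}$. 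You should replace your absolute-bound argument with this invariance argument (or equivalently, use your monotonicity result to reduce to $p=p_{\min}\mathbf{1}$ and verify $f_j(p_{\min};p_{\min}\mathbf{1})\ge p_{\min}$ for $p_{\min}$ small enough).
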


\begin{proof}
We begin with monotonicity. Consider two price vectors $p \ge q$,
and let $p'_j = F_j(p)$ and let $q'_j = F_j(q)$. Consider the function
$$
g(\alpha,p) = \alpha - \sum_{i=1}^m b_i\cdot\frac{\left(c_{ij}/\alpha\right)^{\epsilon}}
   {\sum_{k\neq j} \left(c_{ik}/p_k\right)^{\epsilon}+\left(c_{ij}/\alpha\right)^{\epsilon}}.
$$
In other words, $g(\alpha,p)$ is $\alpha$ minus the total spending of all the
buyers on good $j$ when the price of good $j$ is $\alpha$ and the other prices
are given by the vector $p$ (thus, $g(\alpha,p) = 0$ iff $\alpha = F_j(p)$).
We have that $g(p'_j,p) = g(q'_j,q) = 0$. Notice that for any $k$, increasing $p_k$
decreases $g(\alpha,p)$. On the other hand, $g(\alpha,p)$ increases as $\alpha$
increases (an immediate consequence of
Lemma~\ref{lm: profit monotonicity}). Thus, $g(p'_j,q)\geq 0$. If
$g(\alpha,q) = 0 \le g(p'_j,q)$, then it must be that $\alpha\le p'_j$.
Thus $q'_j\le p'_j$.
	
Next we prove sub-homogeneity.	Let $p'_j=F_j(p)$. Then,
$$
g(\lambda p'_j,\lambda p) =
\lambda p'_j - \sum_i  b_i \cdot \frac{\left(c_{ij}/p'_j\right)^{\epsilon}}
   {\sum_{k\neq j} \left(c_{ik}/p_k\right)^{\epsilon} +
   \left(c_{ij}/p'_j\right)^{\epsilon}} = g(p'_j,p) - (1-\lambda)\cdot p'_j \leq 0.
$$
Thus, by the monotonicity of $g$ in $\alpha$,
if $g(\alpha,\lambda p) = 0\ge g(\lambda p'_j,\lambda p)$,
then $\alpha \geq \lambda p'_j$. Finally, if all the entries of $p$
are strictly positive, then $p'_j > 0$, so
$g(\lambda p'_j,\lambda p) < 0$ and therefore
$g(\alpha,\lambda p) = 0$ implies that $\alpha > \lambda p'_j$.

%%YR: placed the statement on positivity here instead of at the end. LJS: moved to end (again?) after discussion on Nov 2.
Finally, we show price-boundedness.
% (which also trivially implies positivity, as $p_{\min} > 0$). 
As before, let $p'_j=F_j(p)$.
For the upper bound, notice that if
$p_j > \sum_i b_i$ then the total demand for good $j$ must be
less than $1$, regardless of the other prices. So we can simply
set $p_{\max} = \sum_i b_i$. For the lower bound, consider any buyer
$i$ with $c_{ij} > 0$. Consider the situation where all the entries
of $p$ are at least $p_{\min}$, and the price of good $j$ is
$q_j < p_{\min}$ (we will specify $p_{\min}$ shortly). The
demand that $i$ has for good $j$ in this situation is
$$
\frac{b_i}{q_j}\cdot\frac{\left(c_{ij}/q_j\right)^{\eps}}
{\sum_{k\neq j} \left(c_{ik}/p_k\right)^{\eps}+\left(c_{ij}/q_j\right)^{\eps}}
> \frac{C_{ij}}{p_{\min}},
$$
where
$C_{ij} = \frac{b_i\cdot c_{ij}^{\eps}}{\sum_k c_{ik}^{\eps}}$.
Set $p_{\min} = \min\{1/C_{ij}:\ c_{ij} > 0\}$. The
demand that $i$ alone generates for good $j$ is more than $1$,
so $q_j\ne p'_j$. Thus, we must have $p'_j \ge p_{\min}$. Given the value of $p_{\min}$, this shows also positivity.
\end{proof}

\paragraph{Best-response with lookahead (BRL) beliefs}
In these dynamics, each seller best-responds to
a belief (a mental model) of what the other sellers plan to do.
We already discussed a general framework of forming
beliefs. Now that we have defined best-response updates,
Lemma~\ref{lm: belief-based updates} immediately implies
the following corollary.
\begin{corollary}\label{cor: iterative best response}
Let $p_{\min},p_{\max}$ be as stipulated by
Lemma~\ref{lm: best response properties}.
Suppose that a price update $F_j$ that seller $j$ applies
to a price vector $p\in [p_{\min},p_{\max}]^n$ is a
best-response (as defined in this section) to a belief
$\pi = \pi^{\iota}(p)$ that
was generated by a mental model $\iota\in {\cal M}^{n-1}$
(using best-response as defined in this section).
Then $F_j$ is monotone, strictly sub-homogeneous, and
$[p_{\min},p_{\max}]$-price-bounded.
\end{corollary}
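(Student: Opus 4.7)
The proof is essentially a direct chaining of the two preceding lemmas, so I expect no substantive obstacle; the plan is simply to verify that the hypotheses line up.

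First, I would invoke Lemma~\ref{lm: best response properties}, applied in the CES-WGS setting. This supplies the fact that the best-response update is monotone, strictly sub-homogeneous, positive, and $[p_{\min},p_{\max}]$-price-bounded, for the concrete values $p_{\min} = p_{\min}(b,c,\rho)$ and $p_{\max} = p_{\max}(b)$ named in the statement of that lemma. In particular, strict sub-homogeneity entails (ordinary) sub-homogeneity, so all the hypotheses Lemma~\ref{lm: belief-based updates} places on ``best-response'' (namely monotonicity, sub-homogeneity, price-boundedness, and positivity) are met.

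Second, I would note that the update $F_j$ used by seller $j$ in the statement of the corollary is itself best-response, so by the previous step it is monotone, strictly sub-homogeneous, and $[p_{\min},p_{\max}]$-price-bounded; these are exactly the hypotheses Lemma~\ref{lm: belief-based updates} requires of the ``outer'' update. Applying Lemma~\ref{lm: belief-based updates}, we conclude that the composite $F_j^{\iota}(p) = F_j(\pi^{\iota}(p))$ is monotone, strictly sub-homogeneous, and $[p_{\min},p_{\max}]$-price-bounded, which is the claim of the corollary.

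The only point worth flagging is consistency of the interval $[p_{\min},p_{\max}]$ across all nodes of the mental-model tree indexed by $\iota\in {\cal M}^{n-1}$: every such node represents a best-response for the \emph{same} CES-WGS market with the \emph{same} parameters $b,c,\rho$, so Lemma~\ref{lm: best response properties} produces the same interval at every node. Hence the inductive argument underlying Lemma~\ref{lm: belief-based invisible updates} -- more precisely, the induction on level in the proof of Lemma~\ref{lm: belief-based updates} that propagates monotonicity, sub-homogeneity, and price-boundedness up through $\pi^{\iota}$ via Lemma~\ref{lm: composition} -- goes through uniformly, and the final composition with the outer best-response $F_j$ inherits strict sub-homogeneity because $F_j$ is strictly sub-homogeneous and $\pi^{\iota}$ is positive (being $[p_{\min},p_{\max}]$-price-bounded with $p_{\min}>0$).
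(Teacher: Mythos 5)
Your proposal is correct and follows exactly the paper's route: the paper's proof of this corollary is simply the observation that Lemma~\ref{lm: best response properties} certifies best-response as monotone, strictly sub-homogeneous, positive, and price-bounded, after which Lemma~\ref{lm: belief-based updates} applies verbatim. The only blemish is the stray reference ``\texttt{lm: belief-based invisible updates}'', evidently a typo for \texttt{lm: belief-based updates}.
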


%%%%%%%%%%%%%%%%%%%%%%%%%%%%%%%%%%%%%%%%%%
%%%%%%%%%    Synchronous Dynamics
%%%%%%%%%%%%%%%%%%%%%%%%%%%%%%%%%%%%%%%%%%
\section{Synchronous Dynamics}
\newcommand{\br}[1]{\left({#1}\right)}
\newcommand{\norm}[1]{\left\| {#1} \right\|}\newcommand{\abr}[1]{\left | {#1} \right |}
\newcommand{\expb}[1]{\exp\left({#1} \right)}

Our main tool for proving convergence of BRL
dynamics is the following theorem. Before stating the
theorem, we require a definition.
\begin{definition}
Consider the set $\RR_{++}^n\subset \RR^n$ of vectors
with strictly positive coordinates. The
Thompson metric $d$ on $\RR_{++}^n$ (see~\cite{NonlinearPerron})
is defined as follows.
For $x,y\in\RR_{++}^n$,
\[
d(x,y) =\max_i \left| \log \frac{x_i}{y_i} \right| =
\| \log x - \log y\|_\infty,\]
% $$ d(x,y) = \log \max\left\{\max_i \frac{x_i}{y_i},\max_i \frac{y_i}{x_i}\right\} =
% \| \log x - \log y\|_\infty, $$
where $\log x$ means the vector of logarithms of the entries of $x$.
\end{definition}

The following Lemma shows how this metric is related to the standard $\ell_2$ and $\ell_\infty$ metrics.
\begin{lemma}\label{lm:MetricComparison}
Let $p^a,p^b \in [p_{\min},p_{\max}]^n$ with $0 < p_{\min} < p_{\max}$. Then, we have
\begin{align*}
& \norm{p^a-p^b}_{\infty}\leq \frac{\br{p_{\max}}^2}{p_{\min}}d\br{p^a,p^b} \\
& \norm{p^a-p^b}_{2}\leq \sqrt{n}\frac{\br{p_{\max}}^2}{p_{\min}}d\br{p^a,p^b}	
\end{align*}
\end{lemma}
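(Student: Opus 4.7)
The plan is to bound $\|p^a-p^b\|_\infty$ coordinatewise via the mean value theorem applied to $\log$, and then deduce the $\ell_2$ bound from the standard inequality $\|v\|_2 \le \sqrt n\, \|v\|_\infty$. The two inequalities share essentially the same proof; in fact the stated constant is not even tight, so the main task is just to produce any bound of the claimed form without getting lost in algebra.

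For the first inequality, I fix an index $i$ and apply the mean value theorem to the function $\log$ on the interval between $p^a_i$ and $p^b_i$. This produces some $c_i$ between $p^a_i$ and $p^b_i$ such that
\[
\left|\log p^a_i - \log p^b_i\right| \;=\; \frac{|p^a_i - p^b_i|}{c_i}.
\]
Since $c_i \in [p_{\min},p_{\max}]$, I obtain $|p^a_i - p^b_i| = c_i\bigl|\log p^a_i-\log p^b_i\bigr| \le p_{\max}\,d(p^a,p^b)$. Taking the maximum over $i$ gives $\|p^a-p^b\|_\infty \le p_{\max}\,d(p^a,p^b)$, which is stronger than the stated bound $\tfrac{p_{\max}^2}{p_{\min}}\,d(p^a,p^b)$ since $p_{\max}\le p_{\max}^2/p_{\min}$.

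For the second inequality, I combine the previous bound with the elementary fact that for any vector $v\in\RR^n$, $\|v\|_2 \le \sqrt n\,\|v\|_\infty$, yielding
\[
\|p^a-p^b\|_2 \;\le\; \sqrt n\,\|p^a-p^b\|_\infty \;\le\; \sqrt n\,\frac{(p_{\max})^2}{p_{\min}}\,d(p^a,p^b).
\]

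There is no real obstacle here; the only thing to be careful about is to handle the case $p^a_i = p^b_i$ (trivially) and to note that $c_i$ is guaranteed to lie in $[p_{\min},p_{\max}]$ because both endpoints do and $[p_{\min},p_{\max}]$ is convex. The slack in the constant is harmless: the author evidently wants a clean form that matches the constants appearing in subsequent contraction arguments.
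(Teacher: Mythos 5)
Your proof is correct, but it takes a genuinely different route from the paper's. The paper bounds $|p^a_i-p^b_i|\le p_{\max}\bigl|p^a_i/p^b_i-1\bigr|\le p_{\max}\bigl(e^{d(p^a,p^b)}-1\bigr)$ and then linearizes the exponential via the elementary inequality $e^t-1\le\kappa t$ valid for $t\in[0,\log\kappa]$, taking $\kappa=p_{\max}/p_{\min}$ (which is where the extra factor $p_{\max}/p_{\min}$ in the stated constant comes from, since $d(p^a,p^b)\le\log(p_{\max}/p_{\min})$ on the box). You instead apply the mean value theorem to $\log$ on the segment between $p^a_i$ and $p^b_i$, getting $|p^a_i-p^b_i|=c_i\,|\log p^a_i-\log p^b_i|$ with $c_i\in[p_{\min},p_{\max}]$, hence the sharper bound $\|p^a-p^b\|_\infty\le p_{\max}\,d(p^a,p^b)$; the stated inequality follows since $p_{\max}\le p_{\max}^2/p_{\min}$. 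Both arguments are sound and both reduce the $\ell_2$ bound to the $\ell_\infty$ bound via $\|v\|_2\le\sqrt{n}\,\|v\|_\infty$. What your approach buys is a cleaner and strictly better constant with less algebra; what the paper's approach buys is nothing you lose, so the difference is purely one of technique. Your remark about handling $p^a_i=p^b_i$ separately and about $c_i$ lying in $[p_{\min},p_{\max}]$ by convexity is the right level of care.
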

\begin{proof}
For each $i$, we have
\[\abr{p^a_i-p^b_i} \leq p_{\max}\abr{\frac{p^a_i}{p^b_i}-1}\leq p_{\max} \br{\expb{d\br{p^a,p^b}}-1}\]	
The function $f\br{t}=\expb{t}-1-\kappa t$ is non-increasing on the interval $[0,\log\br{\kappa}]$ for every $\kappa>0$ and evaluates to $0$ at $t=0$. Hence $\expb{t}-1 \leq \kappa t$ for every $t \in [0,\log\br{\kappa}]$. 

Since $d\br{p^a,p^b}\leq \log\br{\frac{p_{\max}}{p_{\min}}}$, we can choose $\kappa=\frac{p_{\max}}{p_{\min}}$ and conclude that
\[\expb{d\br{p^a,p^b}}-1 \leq \frac{p_{\max}}{p_{\min}} d\br{p^a,p^b}\]
Thus, we have 
\[\abr{p^a_i-p^b_i} \leq p_{\max}\frac{p_{\max}}{p_{\min}}d\br{p^a,p^b}=\frac{\br{p_{\max}}^2}{p_{\min}}d\br{p^a,p^b}\]	
Since the bound holds for each $i$, it holds for the $\infty$ norm as well. The $2$-norm bound simply uses the fact that the $2$ norm is at most $\sqrt{n}$ times the infinity norm.
\end{proof}

\begin{theorem}\label{thm: contraction}
Let $0 < p_{\min}\le p_{\max} < \infty$.
If $F:[p_{\min},p_{\max}]^n\rightarrow [p_{\min},p_{\max}]^n$ is
monotone and strictly sub-homogeneous,
% and (evidently) $[p_{\min},p_{\max}]$-price-bounded,
then it is a contraction
with respect to the Thompson metric $d$.
%%YR: added the following.
Also, if we require merely sub-homogeneity,
rather than strict sub-homogeneity,
then $F$ is non-expanding with respect to $d$.
\end{theorem}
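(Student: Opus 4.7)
The plan is to exploit the multiplicative-style characterization of the Thompson metric: $d(p,q)\le r$ is equivalent to the coordinatewise sandwich $e^{-r}q\le p\le e^{r}q$. Since $F$ is monotone, such a sandwich is preserved by applying $F$; since $F$ is sub-homogeneous, the scalar multiples $e^{\pm r}$ can be pulled across $F$ (up to an inequality in the right direction). The combination yields a sandwich $\lambda F(q)\le F(p)\le \lambda^{-1}F(q)$ with $\lambda=e^{-d(p,q)}$, which is exactly the statement $d(F(p),F(q))\le d(p,q)$.

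Concretely, I would proceed as follows. Let $p,q\in[p_{\min},p_{\max}]^{n}$ with $p\ne q$ (the case $p=q$ is trivial), set $r=d(p,q)>0$ and $\lambda=e^{-r}\in(0,1)$, and note that both
\begin{equation*}
\lambda q\le p\le \lambda^{-1}q \qquad\text{and equivalently}\qquad \lambda p\le q\le \lambda^{-1}p
\end{equation*}
hold coordinatewise. Applying monotonicity to the inequality $\lambda q\le p$ gives $F(\lambda q)\le F(p)$, and sub-homogeneity gives $\lambda F(q)\le F(\lambda q)$, hence $\lambda F(q)\le F(p)$. Symmetrically, applying monotonicity and sub-homogeneity to $\lambda p\le q$ yields $\lambda F(p)\le F(\lambda p)\le F(q)$, i.e., $F(p)\le \lambda^{-1}F(q)$. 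Taking logs coordinatewise,
\begin{equation*}
|\log(F_j(p)/F_j(q))|\le -\log\lambda = r = d(p,q) \quad\text{for all }j,
\end{equation*}
and taking the maximum over $j$ gives $d(F(p),F(q))\le d(p,q)$, proving the non-expansion claim.

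For the strictly sub-homogeneous case, I would repeat the above but replace the inequality $\lambda F(q)\le F(\lambda q)$ with the strict one $\lambda F(q)<F(\lambda q)$ that is available because $q\ge p_{\min}\mathbf{1}>0$ and each component of $F$ is strictly sub-homogeneous; similarly on the other side using $p>0$. This produces strict coordinatewise inequalities $|\log(F_j(p)/F_j(q))|<d(p,q)$ for every $j$. Since $j$ ranges over the finite set $[n]$, the maximum is itself strictly less than $d(p,q)$, giving $d(F(p),F(q))<d(p,q)$.

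The only mildly delicate point is making sure that the sub-homogeneity and monotonicity hypotheses can be applied to the auxiliary points $\lambda q$ and $\lambda p$, which may lie outside $[p_{\min},p_{\max}]^{n}$ even though $p,q$ do not; I will read these hypotheses as properties of $F$ on the whole positive orthant $\mathbb{R}_{++}^{n}$ (as the defining inequalities are stated for all positive price vectors), using the price-boundedness only to keep iterates confined to the box where $d$ remains finite. Apart from that, the argument is essentially a bookkeeping exercise and I do not anticipate a substantive obstacle; the crucial observation is simply that strict sub-homogeneity at each coordinate, together with the finiteness of $n$, upgrades coordinatewise strictness to strictness of the maximum, which is all that is needed for the Thompson-metric contraction bound.
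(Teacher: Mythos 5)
Your argument for the non-expansion claim is correct and is essentially identical to the paper's: both translate $d(p,q)=r$ into the coordinatewise sandwich $e^{-r}q\le p\le e^{r}q$, push it through $F$ using monotonicity and (sub-)homogeneity, and read off the bound on $d(F(p),F(q))$. The handling of the auxiliary points $\lambda q$ lying outside the box is a fair observation and matches what the paper implicitly assumes.

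However, for the first claim there is a genuine gap: you stop at $d(F(p),F(q))<d(p,q)$ for all $p\ne q$, but this is strictly weaker than being a \emph{contraction} in the sense the paper needs and uses. In Corollary~\ref{cor: best response contraction} the map is assigned a contraction constant $\xi_{\best}<1$, and Theorem~\ref{thm: synchronous main} derives the linear convergence rate from a uniform bound $d(F(p),F(q))\le \xi\, d(p,q)$ with a single $\xi<1$ valid for all pairs. Pointwise strict decrease does not by itself give such a $\xi$: the ratio $d(F(p),F(q))/d(p,q)$ could a priori approach $1$ along some sequence of pairs (in particular as $p\to q$), in which case one gets convergence of iterates (Edelstein-type) but no exponential rate. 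The paper closes this gap with an additional compactness step: it defines $h(\xi)=\sup\{d(F(p),F(q))-\xi\, d(p,q)\}$ over the compact box, argues $h(1)<0$, and uses continuity of $h$ to extract some $\xi<1$ with $h(\xi)<0$. (That step itself deserves care --- the supremum is over pairs and degenerates at $p=q$ --- but the point is that \emph{some} uniformization argument is required and your proposal omits it entirely.) You should add an argument producing a uniform modulus of contraction, e.g.\ via compactness of $\{(p,q): d(p,q)\ge \delta\}$ together with a separate treatment of nearby pairs, or by quantifying the strictness of sub-homogeneity uniformly over the box.
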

% \ljs{what does "evidently" mean here?}

\begin{proof}
By assumption $F$ is  $[p_{\min},p_{\max}]$-price-bounded.
Fix $p,q\in [p_{\min},p_{\max}]^n$. Let $\eta = e^{d(p,q)}$.
Then, we have $p \geq \frac{1}{\eta}\cdot q$ and $q\geq \frac{1}{\eta}\cdot p$.
We get that
$$
F(p) \geq F\left(\frac{1}{\eta}\cdot q\right) > \frac{1}{\eta}\cdot F(q),
$$
where the first inequality uses monotonicity and the second
inequality uses strict sub-homogeneity. Similarly,
$$
F(q) > \frac{1}{\eta}\cdot F(p).
$$
Thus, $d(F(p),F(q)) < \log \eta = d(p,q)$ for all $p,q\in [p_{\min},p_{\max}]^n$.
Define
$$
h(\xi) = \sup\{d(F(p),F(q)) - \xi\cdot d(p,q):\  p,q\in [p_{\min},p_{\max}]^n\}.
$$
Since $[p_{\min},p_{\max}]^n$ is a compact set, $h(1) < 0$. As $h$ is continuous,
there exists $\xi\in (0,1)$  such that $h(\xi) < 0$. Thus, $F$ is a contraction
mapping with a contraction constant $\xi < 1$.
%%YR: added the following.
Finally, if we replace strict sub-homogeneity by sub-homogeneity,
then all the strict inequalities above become weak inequalities,
so we get that $d(F(p),F(q))\le d(p,q)$, as stipulated.
\end{proof}

By Lemma~\ref{lm: best response properties}, we immediately
get the following corollary.
\begin{corollary}\label{cor: best response contraction}
An update that consists of all sellers best-responding
to the current price $p$ is a contraction. We denote
its contraction constant by $\xi_{\best} < 1$.
\end{corollary}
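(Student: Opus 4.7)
The plan is to observe that this corollary is essentially just the composition of two previously established results, and to make that composition explicit. The statement asks us to show that the simultaneous best-response map $F = (F_1, \ldots, F_n)$, where each coordinate $F_j$ is the best-response update described in Section~\ref{sec: concrete}, is a contraction in the Thompson metric $d$ on some box $[p_{\min}, p_{\max}]^n$.

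First I would invoke Lemma~\ref{lm: best response properties} to conclude that each coordinate $F_j$ is monotone, strictly sub-homogeneous, positive, and $[p_{\min}, p_{\max}]$-price-bounded for the particular $p_{\min} = p_{\min}(b, c, \rho) > 0$ and $p_{\max} = p_{\max}(b)$ produced there. Since the paper defines these properties for a vector-valued update $F$ by requiring them to hold for each coordinate, it follows immediately that the joint update $F$ is monotone, strictly sub-homogeneous, and $[p_{\min}, p_{\max}]$-price-bounded; in particular $F$ is a well-defined map $[p_{\min}, p_{\max}]^n \to [p_{\min}, p_{\max}]^n$.

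Next I would apply Theorem~\ref{thm: contraction} directly to this $F$. The hypotheses of the theorem (monotonicity, strict sub-homogeneity, and that the domain equals the codomain $[p_{\min}, p_{\max}]^n$) are exactly the properties just verified, so $F$ is a contraction in the Thompson metric $d$ with some contraction constant $\xi_{\best} \in (0,1)$, which we simply name.

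There is no real obstacle here: the work has already been done in the two cited results, and the only thing to check is the bookkeeping that the coordinate-wise properties in Lemma~\ref{lm: best response properties} match the vector-valued hypotheses required by Theorem~\ref{thm: contraction}, which is immediate from the convention adopted in Section~\ref{sec: preliminaries}. I would keep the proof to one or two sentences, simply citing the two results and naming $\xi_{\best}$.
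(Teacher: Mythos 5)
Your proposal is correct and matches the paper's own (one-line) argument exactly: the paper derives this corollary by combining Lemma~\ref{lm: best response properties} with Theorem~\ref{thm: contraction}, precisely the composition you spell out. The only content beyond citation is the bookkeeping that coordinate-wise properties give the vector-valued hypotheses, which you correctly note is immediate from the paper's conventions.
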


We say that a set ${\cal F}$ of price vector updates is {\em contracting}
if all its elements are contractions and there is a uniform upper
bound $<1$  on the contraction constants. We have the following
corollary of Theorem~\ref{thm: contraction}.
\begin{corollary}\label{cor: belief based best response contraction}
Let ${\cal F}$ be a set of price vector updates where each
$F\in {\cal F}$ is generated by all the sellers forming beliefs
that satisfy the conditions of Lemma~\ref{lm: belief-based updates},
then best-responding to those beliefs. Then, ${\cal F}$ is contracting, with
uniform bound $\le \xi_{\best}$.
\end{corollary}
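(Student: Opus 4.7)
The plan is to decompose each $F\in{\cal F}$ into two stages---belief formation followed by best-response---bound the Lipschitz constant of each stage separately in the Thompson metric, and multiply. The inner (belief) stage will only be non-expanding, while the outer (best-response) stage is a genuine contraction with constant $\xi_{\best}$, so the composition inherits $\xi_{\best}$ regardless of the belief used.

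Concretely, I would write each $F\in{\cal F}$ coordinate-wise as $F_j(p)=\best_j(\pi_j(p))$, where $\pi_j=\pi^{\iota_j}$ is the belief formed by seller $j$ from some mental model $\iota_j\in{\cal M}$, and $\best_j$ is the scalar best-response of seller $j$ defined in Section~\ref{sec: concrete}. By Lemma~\ref{lm: belief-based updates}, the belief map $\pi_j$ is monotone, sub-homogeneous, and $[p_{\min},p_{\max}]$-price-bounded (strict sub-homogeneity is not available, since the level-$0$ ``stay put'' leaf is only weakly sub-homogeneous). The non-expansion half of Theorem~\ref{thm: contraction} therefore yields $d(\pi_j(p),\pi_j(q))\le d(p,q)$. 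For the outer stage, Corollary~\ref{cor: best response contraction} gives $d(\best(x),\best(y))\le\xi_{\best}\cdot d(x,y)$; since the Thompson metric is the $\ell_\infty$-norm of the coordinate log-ratios, this collective bound implies the per-coordinate estimate $|\log\best_j(x)-\log\best_j(y)|\le\xi_{\best}\cdot d(x,y)$ for every $j$ and every $x,y\in[p_{\min},p_{\max}]^n$. Chaining the two bounds gives
\[
|\log F_j(p)-\log F_j(q)|\le\xi_{\best}\cdot d(\pi_j(p),\pi_j(q))\le\xi_{\best}\cdot d(p,q),
\]
and taking the maximum over $j$ yields $d(F(p),F(q))\le\xi_{\best}\cdot d(p,q)$. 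Since $\xi_{\best}$ depends only on the market data and not on the $\{\iota_j\}$ that produced $F$, the bound is uniform over ${\cal F}$.

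The main thing to watch is the asymmetry between the two stages: Lemma~\ref{lm: belief-based updates} only delivers weak sub-homogeneity of $\pi_j$ (because the recursion bottoms out at ``stay put''), which is why I must invoke the non-expansion conclusion of Theorem~\ref{thm: contraction} for the inner stage, reserving the strict-contraction conclusion---quantified concretely as $\xi_{\best}$ by Corollary~\ref{cor: best response contraction}---for the outer best-response stage. No further structural properties of the mental-model tree are needed, which is what allows the bound to hold for arbitrary and mutually inconsistent beliefs across sellers and across time steps.
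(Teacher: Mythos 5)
Your proposal is correct and follows essentially the same route as the paper: decompose $F_j$ as best-response composed with a belief map, use the non-expansion half of Theorem~\ref{thm: contraction} for the (only weakly sub-homogeneous) belief stage, and chain it with the $\xi_{\best}$ bound from Corollary~\ref{cor: best response contraction} coordinate-wise in the Thompson metric. Your per-coordinate formulation $|\log \best_j(x)-\log \best_j(y)|\le \xi_{\best}\, d(x,y)$ is in fact a slightly cleaner way of writing the same chaining step that the paper expresses as a product of log-ratios.
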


\begin{proof}
Clearly, combining Lemmas~\ref{lm: belief-based updates}
and~\ref{lm: best response properties} with Theorem~\ref{thm: contraction}
proves that every $F\in {\cal F}$ is a contraction. So in order to
complete the proof, we need to show that the contraction constant
is at most $\xi_{\best}$. Let $B_j$ denote the best-response
price update of seller $j$. Notice that by the definition of $\xi_{\best}$,
for every $p,q\in [p_{\min},p_{\max}]^n$, $p\ne q$, and for every
$i,j\in [n]$,
$$
\frac{|\log(B_j(p) / B_j(q))|}{|\log(p_i/q_i)|}\le\xi_{\best}.
$$
Let $\pi^j$ denote the belief (a mapping between price vectors) that
is used by $j$ in the update $F_j$. I.e., $F_j(p) = B_j(\pi^j(p))$. Notice
that by our assumptions on $\pi^j$ and Theorem~\ref{thm: contraction},
$\pi^j$ is non-expanding. In particular, for every
$p,q\in [p_{\min},p_{\max}]^n$, $p\ne q$, and for every
$i,k\in [n]$,
$$
\frac{|\log(\pi^j_k(p) / \pi^j_k(q))|}{|\log(p_i/q_i)|}\le 1.
$$
Thus, we have
that for every $p,q\in [p_{\min},p_{\max}]^n$, $p\ne q$, and for every
$i,j,k\in [n]$,
$$
\frac{|\log(F_j(p) / F_j(q))|}{|\log(p_i/q_i)|} =
\frac{|\log(B_j(\pi^j(p)) / B_j(\pi^j(q)))|}{|\log(\pi^j_k(p)/\pi^j_k(q))|}\cdot
\frac{|\log(\pi^j_k(p) / \pi^j_k(q))|}{|\log(p_i/q_i)|}\le \xi_{\best}\cdot 1  = \xi_{\best},
$$
which completes the proof.
\end{proof}
%\ljs{ Cor.~\ref{cor: belief based best response contraction} needs proof.
%And do we still need Cor.~\ref{cor: finite set contraction} ?}
%%YR: Added a proof. Commented the trivial yet superfluous corollary.

%Also note that (trivially)
%\begin{corollary}\label{cor: finite set contraction}
%Any finite set ${\cal F}$ of price vector updates that
%are contractions is contracting.
%\end{corollary}

We need an extra property to guarantee convergence
to equilibrium.
\begin{definition}[local stability]
We say that a set of price vector updates ${\cal F}$
is {\em stable} if the following holds for every $F\in {\cal F}$: 
If $p^\ast$ is a vector of equilibrium prices, then $F(p^\ast) = p^\ast$.
\end{definition}

Our main result is the following convergence theorem.
\begin{theorem}\label{thm: synchronous main}
Fix $p_{\min},p_{\max}$.
Let ${\cal F}$ be a contracting and stable set of price
vector updates, where all $F\in {\cal F}$
are monotone, strictly sub-homogeneous, and
$[p_{\min},p_{\max}]$-price-bounded.
Consider the dynamic $p^{t+1} = F^t(p^t)$,
where the choice of $F^t \in {\cal F}$ is arbitrary.\footnote{In
particular, if ${\cal F}$ is a product set, then
for all $j$, $F^t_j$ can be chosen arbitrarily by seller
$j$. Also, the choices can depend on the entire
history of the process, including, but not limited to, the
current prices.}
Then, with initial price vector $p^0\in [p_{\min},p_{\max}]^n$,
the dynamic converges to an equilibrium point (which must
be unique in this case). Moreover, the rate of convergence
is linear (i.e., the distance to equilibrium decays
exponentially fast in the number of time steps).
\end{theorem}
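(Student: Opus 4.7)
The strategy is to apply Banach's fixed-point theorem once, and then exploit the \emph{uniform} contraction bound $\xi<1$ to chain estimates across the arbitrary sequence $F^0, F^1, \ldots$ of updates.

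First, observe that $([p_{\min},p_{\max}]^n, d)$ is compact and complete: by Lemma~\ref{lm:MetricComparison}, the Thompson metric is bi-Lipschitz equivalent to the Euclidean metric on this box (the opposite direction is an equally easy log-series estimate), so compactness and completeness transfer from the Euclidean structure. Pick any $F\in\mathcal{F}$. By hypothesis $F$ is a contraction on this compact complete metric space, so Banach's theorem produces a unique fixed point $p^\ast$ of $F$. In the CES-WGS setting this $p^\ast$ coincides with the market-clearing equilibrium, as the best-response rule (Equation~\eqref{eq:BestResponse}) literally enforces demand equal to supply at a fixed point. By the stability assumption, every $F'\in\mathcal{F}$ likewise satisfies $F'(p^\ast)=p^\ast$, so $p^\ast$ is a common fixed point of all of $\mathcal{F}$.

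Second, for the arbitrary trajectory $p^{t+1}=F^t(p^t)$ with $F^t\in\mathcal{F}$, the uniform contraction bound gives
\[
d(p^{t+1},p^\ast) \;=\; d\bigl(F^t(p^t),F^t(p^\ast)\bigr) \;\le\; \xi\cdot d(p^t,p^\ast),
\]
so by induction $d(p^t,p^\ast)\le \xi^t\cdot d(p^0,p^\ast)$. Lemma~\ref{lm:MetricComparison} then converts this to exponentially decaying bounds in $\ell_\infty$ (and $\ell_2$), yielding linear convergence. Uniqueness of equilibrium is immediate: if some $q^\ast$ were another equilibrium, stability gives $F(p^\ast)=p^\ast$ and $F(q^\ast)=q^\ast$ for any $F\in\mathcal{F}$, whence $d(p^\ast,q^\ast)=d(F(p^\ast),F(q^\ast))\le\xi\cdot d(p^\ast,q^\ast)$, forcing $d(p^\ast,q^\ast)=0$.

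The principal obstacle is that $F^t$ may depend on the entire history and be chosen adversarially, so a per-map Banach iteration does not on its own suffice. The resolution is that the contraction constant is \emph{uniform} across all of $\mathcal{F}$; this uniformity is exactly what Corollary~\ref{cor: belief based best response contraction} delivers, and it is what lets the one-step contraction inequality telescope to the geometric rate above regardless of the seller's adaptive choices.
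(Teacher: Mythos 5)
Your proof is correct and follows essentially the same route as the paper: a common fixed point $p^\ast$ together with the uniform contraction constant yields $d(p^{t+1},p^\ast)=d(F^t(p^t),F^t(p^\ast))\le\xi\cdot d(p^t,p^\ast)$ and hence $d(p^T,p^\ast)\le\xi^T d(p^0,p^\ast)$ by induction. The only divergence is your opening Banach detour: the paper simply takes the (known) existence of a market equilibrium as given and uses stability to make it a common fixed point of every $F\in\mathcal{F}$, whereas your identification of the Banach fixed point of an arbitrary $F$ with the market-clearing price vector is both unnecessary and not justified by the abstract hypotheses (stability only says equilibria are fixed points, not that fixed points of a BRL update $F_j(p)=B_j(\pi^j(p))$ clear the market).
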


\begin{proof}
Let $d$ be the Thompson metric on $[p_{\min},p_{\max}]^n$.
Consider an equilibrium point $p^*$. By the stability of
${\cal F}$, for all $t$, $F^t(p^*) = p^*$. By the fact that ${\cal F}$
is contracting, there exists $\xi_{\max} < 1$ such that for
all $t$, $F^t$ is a $\xi_{\max}$-contraction.
Therefore,
$$
d(p^{t+1},p^*) = d(F^t(p^t),F^t(p^*))\le
\xi_{\max}\cdot d(p^t,p^*).
$$
Inductively, for every $T\ge 0$,
$$
d(p^T,p^*)\le \left(\xi_{\max}\right)^T d(p^0,p^*).
$$
This completes the proof.
\end{proof}

\begin{corollary}
If the buyer utilities are CES with $\rho > 0$, then
synchronous best-response dynamics, as well as BRL
dynamics that satisfy the conditions of
Corollary~\ref{cor: iterative best response}, both converge
to equilibrium at a linear rate.
\end{corollary}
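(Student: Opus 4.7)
\begin{proofsketch}
The plan is to simply verify the hypotheses of Theorem~\ref{thm: synchronous main} for the two families of updates in question, and then invoke it.

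First, I would collect the structural properties. By Lemma~\ref{lm: best response properties}, best-response updates are monotone, strictly sub-homogeneous, positive, and $[p_{\min},p_{\max}]$-price-bounded for the $p_{\min},p_{\max}$ supplied by that lemma. By Corollary~\ref{cor: iterative best response}, every BRL update built from best-response and a mental model in ${\cal M}^{n-1}$ is also monotone, strictly sub-homogeneous, and $[p_{\min},p_{\max}]$-price-bounded. Corollary~\ref{cor: best response contraction} gives a uniform contraction constant $\xi_{\best}<1$ for synchronous best-response, and Corollary~\ref{cor: belief based best response contraction} extends this to the BRL family with the same uniform bound $\xi_{\best}<1$. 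So both families satisfy the ``contracting'' and the ``monotone, strictly sub-homogeneous, price-bounded'' requirements of Theorem~\ref{thm: synchronous main}.

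The only remaining hypothesis is stability: each update must fix every equilibrium price vector $p^{*}$. For best-response, at $p^{*}$ the market clears, so by Corollary~\ref{cor: profit maximality} the profit-maximizing price for seller $j$ given $p^{*}_{-j}$ is exactly $p^{*}_j$; hence $F_j(p^{*})=p^{*}_j$ for every $j$. For BRL, I would argue by induction on the level $s$ of the mental model $\iota$ that $\pi^{\iota}(p^{*})=p^{*}$ coordinate-wise. The base case $s=0$ is immediate since a level-$0$ model stays put at $p^{*}$. For the inductive step, every component $\iota_k$ of a level-$s$ model is itself a best-response applied to beliefs of level $\le s-1$; by the inductive hypothesis those beliefs evaluate to $p^{*}$ at $p^{*}$, and then the just-established fixed-point property of best-response gives $\iota_k(p^{*})=p^{*}_k$. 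Composing, the BRL update $F^{\iota}_j(p^{*})=B_j(\pi^{\iota}(p^{*}))=B_j(p^{*})=p^{*}_j$.

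With contractiveness, stability, monotonicity, strict sub-homogeneity, and price-boundedness all in hand, Theorem~\ref{thm: synchronous main} applies directly and delivers linear convergence (in the Thompson metric, and hence, by Lemma~\ref{lm:MetricComparison}, in $\ell_2$ or $\ell_\infty$) of both the synchronous best-response dynamic and any BRL dynamic meeting Corollary~\ref{cor: iterative best response}. I expect the only mildly subtle point to be the induction on mental-model level for stability of BRL; everything else is just chaining previously proved lemmas and corollaries.
\end{proofsketch}
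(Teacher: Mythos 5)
Your proposal is correct and follows exactly the route the paper intends: the paper states this corollary without proof, relying on the reader to chain Lemma~\ref{lm: best response properties}, Corollaries~\ref{cor: iterative best response}, \ref{cor: best response contraction} and \ref{cor: belief based best response contraction} into the hypotheses of Theorem~\ref{thm: synchronous main}. Your explicit verification of stability --- market clearing at $p^\ast$ plus Corollary~\ref{cor: profit maximality} giving $B_j(p^\ast)=p^\ast_j$, and the induction on mental-model level giving $\pi^{\iota}(p^\ast)=p^\ast$ --- is a genuine detail the paper leaves implicit, and you supply it correctly.
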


\begin{corollary}
Let $\mathcal{F}$ satisfy the assumptions of theorem \ref{thm: synchronous main}.
Let $p^\ast$ denote the vector of equilibrium prices and $p^0$ the initial price vector. After $T$ steps of synchronous price updates, let $p^T$ denote the resulting price vector. Then,
\[{\| p^T-p^\ast \|}_{\infty} = \max_i |p_i - p^\ast_i| \leq \frac{\br{p_{\max}}^2}{p_{\min}} d\br{p^0,p^{\ast}}\br{\zeta_{\max}}^T\]
\[{\| p^T-p^\ast \|}_{2} \leq \sqrt{n} \frac{\br{p_{\max}}^2}{p_{\min}} d\br{p^0,p^{\ast}}\br{\zeta_{\max}}^T\]
\end{corollary}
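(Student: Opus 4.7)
The plan is to combine the linear contraction bound in the Thompson metric (Theorem~\ref{thm: synchronous main}) with the norm comparison result (Lemma~\ref{lm:MetricComparison}), which is essentially an immediate consequence once both are in hand. I interpret $\zeta_{\max}$ here as the uniform contraction constant $\xi_{\max}<1$ guaranteed by the contracting assumption on ${\cal F}$.

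First, I would invoke Theorem~\ref{thm: synchronous main} directly: under its hypotheses, the iterates $p^t$ are well-defined in $[p_{\min},p_{\max}]^n$ (by price-boundedness of each $F^t$) and satisfy
\[
d(p^T, p^\ast) \;\le\; (\xi_{\max})^T\, d(p^0, p^\ast),
\]
where $p^\ast$ is the (unique) equilibrium fixed point of every $F\in{\cal F}$. The fact that $p^\ast$ itself lies in $[p_{\min},p_{\max}]^n$ follows since it is a fixed point of a price-bounded map, and so both $p^T$ and $p^\ast$ lie in the regime where Lemma~\ref{lm:MetricComparison} applies.

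Next, apply Lemma~\ref{lm:MetricComparison} to the pair $(p^T, p^\ast)$: since both vectors sit in $[p_{\min},p_{\max}]^n$, we get
\[
\norm{p^T-p^\ast}_{\infty} \;\le\; \tfrac{(p_{\max})^2}{p_{\min}}\, d(p^T, p^\ast), \qquad \norm{p^T-p^\ast}_{2} \;\le\; \sqrt{n}\,\tfrac{(p_{\max})^2}{p_{\min}}\, d(p^T, p^\ast).
\]
Substituting the contraction bound from the first step yields both inequalities claimed by the corollary.

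There is no real obstacle here: the only minor point worth flagging is matching notation ($\zeta_{\max}$ vs.\ $\xi_{\max}$) and confirming that $p^\ast \in [p_{\min},p_{\max}]^n$ so that Lemma~\ref{lm:MetricComparison} is legitimately applicable to the pair $(p^T,p^\ast)$. Both points are immediate from the hypotheses already in place, so the proposed two-line combination suffices.
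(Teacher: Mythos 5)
Your proposal is correct and is exactly the paper's proof: the paper's entire argument is ``Apply Lemma~\ref{lm:MetricComparison} to Theorem~\ref{thm: synchronous main},'' and you have simply spelled out that combination (including the harmless notational mismatch $\zeta_{\max}$ vs.\ $\xi_{\max}$ and the check that $p^\ast\in[p_{\min},p_{\max}]^n$). Nothing further is needed.
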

\begin{proof}
Apply lemma \ref{lm:MetricComparison} to theorem \ref{thm: synchronous main}. 
\end{proof}

%%%%%%%%%%%%%%%%%%%%%%%%%%%%%%%%%%%%%%%%%%
%%%%%%%%%    Asynchronous Dynamics
%%%%%%%%%%%%%%%%%%%%%%%%%%%%%%%%%%%%%%%%%%
\section{Asynchronous Dynamics}

We consider dynamics where each seller updates its price
at its own varying rate. Thus, at any given time, only a subset
of the sellers update their price. Adopting the notation from
the previous section, we can formally define the dynamic
by allowing some, but not all, of the coordinates of the
price vector updates $F^t$ to be the identity map.
The other coordinates are required to satisfy the same
conditions that are stated in Theorem~\ref{thm: synchronous main}.
We further require that no seller stays put with no update forever.
Thus we can partition the time line into epochs.
An epoch ends when all the prices are updated at least once,
and a new epoch begins in the next time step.
\begin{theorem}\label{thm: asynchronous main}
Under the assumptions stated above, the dynamic, starting
at initial state $p^0\in [p_{\min},p_{\max}]^n$,
converges to the unique equilibrium point. The rate of
convergence, measured by the number of epochs, is
linear, i.e., the distance to
equilibrium decays exponentially fast in the number
of epochs.
\end{theorem}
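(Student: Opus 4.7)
The plan is to mimic the contraction argument from Theorem~\ref{thm: synchronous main}, exploiting that the Thompson metric $d(p,q) = \max_j |\log(p_j/q_j)|$ is of $\ell_\infty$ type and so decomposes cleanly into coordinate-wise statements. Let $\xi_{\max} < 1$ be the uniform contraction constant of the underlying contracting family ${\cal F}$ of synchronous maps whose coordinates are selectively activated by the asynchronous schedule, and let $p^*$ denote the unique equilibrium fixed point (whose existence and uniqueness is delivered by Theorem~\ref{thm: synchronous main} applied to any synchronous instantiation).

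First I would show that a single asynchronous step is non-expanding in $d$. Let $J_t \subseteq [n]$ denote the set of sellers who update at time $t$. For $j \notin J_t$ the coordinate is frozen, so $|\log(p^{t+1}_j/p^*_j)| = |\log(p^t_j/p^*_j)| \le d(p^t,p^*)$. For $j \in J_t$ the update rule is the $j$-th component of some $F \in {\cal F}$; stability gives $F(p^*) = p^*$, and Theorem~\ref{thm: contraction} then yields
\[
|\log(p^{t+1}_j / p^*_j)| = |\log(F_j(p^t) / F_j(p^*))| \le d(F(p^t), F(p^*)) \le \xi_{\max} \cdot d(p^t, p^*).
\]
Taking the maximum over $j$ gives $d(p^{t+1}, p^*) \le d(p^t, p^*)$, so the Thompson distance to equilibrium is monotonically non-increasing.

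Second, I would show that $d$ contracts by the factor $\xi_{\max}$ over a full epoch. Suppose an epoch spans times $t_0, t_0+1, \ldots, t_0 + L - 1$, with $p^{t_0 + L}$ the state at the epoch's end. By the definition of an epoch, every coordinate $j$ is updated at least once during this window; let $\tau_j \in \{t_0, \ldots, t_0 + L - 1\}$ denote the last such time for coordinate $j$. Because $j$ is frozen after $\tau_j$, we have $p^{t_0+L}_j = p^{\tau_j + 1}_j$. Combining the per-coordinate contraction bound at time $\tau_j$ with the non-expansiveness from the first step,
\[
|\log(p^{t_0+L}_j / p^*_j)| \le \xi_{\max} \cdot d(p^{\tau_j}, p^*) \le \xi_{\max} \cdot d(p^{t_0}, p^*).
\]
This holds for every $j$, so $d(p^{t_0 + L}, p^*) \le \xi_{\max} \cdot d(p^{t_0}, p^*)$. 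Iterating over successive epochs yields $d(p^t, p^*) \le (\xi_{\max})^{E(t)} \cdot d(p^0, p^*)$, where $E(t)$ counts the number of completed epochs by time $t$, which is the required linear rate.

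The only subtlety, and thus the main (modest) obstacle, is that within an epoch the last-update time $\tau_j$ of coordinate $j$ may fall anywhere in the window rather than at the epoch start, so one cannot directly contract against $d(p^{t_0}, p^*)$. The bridge is the non-expansiveness of each intermediate step proved in the first part, which allows one to sandwich $d(p^{\tau_j}, p^*) \le d(p^{t_0}, p^*)$ at no cost. Once this observation is in hand, the proof is essentially a one-line application of the Thompson-metric machinery already developed for the synchronous case.
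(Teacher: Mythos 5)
Your proof is correct, but it follows a genuinely different route from the paper's. The paper's proof is a reduction: it reinterprets each epoch as a \emph{single} synchronous update in which each seller's last price change within the epoch is viewed as a function of the epoch's starting prices, with all intermediate updates absorbed into that seller's ``belief''; the composed epoch map is then monotone, strictly sub-homogeneous and price-bounded by Lemma~\ref{lm: belief-based updates} (really by iterating Lemma~\ref{lm: composition}), so Theorem~\ref{thm: synchronous main} applies verbatim with epochs as time steps. You instead argue directly, exploiting the $\ell_\infty$ structure of the Thompson metric: frozen coordinates preserve their log-distance to $p^*_j$, updated coordinates inherit the $\xi_{\max}$ bound from the ambient contraction $F\in{\cal F}$ via $|\log(F_j(p)/F_j(p^*))|\le d(F(p),F(p^*))\le\xi_{\max}\,d(p,p^*)$, and the ``last update time $\tau_j$'' device combined with per-step non-expansiveness transfers the contraction from time $\tau_j$ back to the epoch start. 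Your argument is more explicit and self-contained --- it makes visible exactly where stability, the coordinate-wise nature of $d$, and the uniform constant $\xi_{\max}$ are used, and it avoids having to verify that the composite epoch map really satisfies the hypotheses of Lemma~\ref{lm: belief-based updates} (a point the paper asserts rather briskly). The paper's reduction is shorter and reuses existing machinery wholesale, but yours arguably gives a cleaner account of why partial updates cannot hurt; both yield the same per-epoch factor $\xi_{\max}$. One small presentational caveat: you contract toward the fixed point $p^*$ only, rather than proving the epoch map is a contraction on arbitrary pairs $p,q$; that is entirely sufficient for the convergence and rate claimed in the theorem, since stability guarantees $p^*$ is a common fixed point of every map involved.
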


\begin{proof}
In an epoch, we can think of the last update of
each seller as being applied to the price vector in the
beginning of the epoch, and based on a belief that takes
into account all the previous updates in the epoch. So
replace the asynchronous process by a synchronous
process where time steps are epochs and price updates
map the prices in the beginning of an epoch to the
prices at the end of an epoch. The claim
follows by applying Lemma~\ref{lm: belief-based updates}
and Theorem~\ref{thm: synchronous main}.
\end{proof}

\begin{corollary}
If the buyer utilities are CES with $\rho > 0$, then
asynchronous best-response dynamics, as well as
asynchronous BRL dynamics, both converge
to equilibrium at a linear rate, when measured against
the number of epochs. By lemma \ref{lm:MetricComparison}, the linear convergence holds in the Thompson metric $d$, the $\ell_2$ and the $\ell_\infty$ metrics.
\end{corollary}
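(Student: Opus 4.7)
The plan is to invoke Theorem~\ref{thm: asynchronous main} by verifying its three structural hypotheses---monotonicity plus strict sub-homogeneity plus $[p_{\min},p_{\max}]$-price-boundedness, the contracting property, and local stability---for each of the two families of price updates (synchronous best-response and BRL) under CES utilities with $\rho>0$, and then converting the Thompson-metric bound into $\ell_2$ and $\ell_\infty$ bounds via Lemma~\ref{lm:MetricComparison}.

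First I would handle the structural hypotheses. For pure best-response, Lemma~\ref{lm: best response properties} directly gives monotonicity, strict sub-homogeneity, positivity, and price-boundedness on $[p_{\min},p_{\max}]$ for each $F_j$, with $p_{\min}=p_{\min}(b,c,\rho)>0$ and $p_{\max}=\sum_i b_i$. For BRL, Corollary~\ref{cor: iterative best response} extends the same three properties to every composition of best-response with a mental-model belief $\pi^\iota$. Corollary~\ref{cor: best response contraction} and Corollary~\ref{cor: belief based best response contraction} then give contraction with a uniform constant $\xi_{\best}<1$ for the full family of updates used in either dynamic, so that the asynchronous epoch-map inherits a uniform contraction constant through the reduction in Theorem~\ref{thm: asynchronous main}.

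Next I would verify local stability, which is the only item not already packaged as a lemma above. At an equilibrium price vector $p^*$, Corollary~\ref{cor: profit maximality} tells us that market-clearing prices are exactly the profit-maximizing best responses, so $B_j(p^*)=p_j^*$ for every seller $j$. For BRL, I would argue by induction on the level $s$ of a mental model $\iota\in {\cal M}_s$ that $\pi^\iota(p^*)=p^*$: the level-$0$ ``stay put'' model trivially fixes $p^*$, and at level $s$ each coordinate $\pi_k^\iota(p^*)$ is a best-response to lower-level beliefs which by induction all evaluate to $p^*$, hence equals $p_k^*$. Applying $B_j$ once more then leaves $p^*$ fixed, establishing stability of the full BRL family. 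The same argument, of course, covers best-response as the $s=1$ case.

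With all hypotheses in place, Theorem~\ref{thm: asynchronous main} yields
\[
d(p^{T},p^*)\le (\xi_{\best})^{T}\, d(p^0,p^*),
\]
where $T$ counts epochs; this is linear convergence in the Thompson metric. Finally, Lemma~\ref{lm:MetricComparison} converts the bound into the claimed estimates
\[
\|p^T-p^*\|_\infty\le \frac{(p_{\max})^2}{p_{\min}}\,(\xi_{\best})^T\, d(p^0,p^*),\qquad \|p^T-p^*\|_2\le \sqrt{n}\,\frac{(p_{\max})^2}{p_{\min}}\,(\xi_{\best})^T\, d(p^0,p^*).
\]
I expect the only step requiring any genuine thought to be the stability induction for BRL, since everything else is a direct invocation of results assembled earlier; the induction itself is straightforward once one observes that $B_j$ and hence every layer of $\pi^\iota$ is determined by market-clearing, which holds coordinatewise at $p^*$.
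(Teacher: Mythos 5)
Your proposal is correct and follows essentially the same route as the paper, which proves this corollary simply by invoking Theorem~\ref{thm: asynchronous main} together with Lemma~\ref{lm:MetricComparison} and the previously established properties of best-response and BRL updates. Your explicit induction verifying local stability (that $B_j(p^*)=p^*_j$ via Corollary~\ref{cor: profit maximality} and that every belief $\pi^{\iota}$ fixes $p^*$) is a detail the paper leaves implicit, and it is a worthwhile addition rather than a deviation.
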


%\newpage

\bibliography{refs}
\bibliographystyle{plain}
\end{document}